\Crefname{subsection}{Section}{Sections}
\renewcommand{\epsilon}{\varepsilon}    
\newcommand{\Xkf}{\widehat{X}}            
\newcommand{\Lin}{\mathcal{L}} 
\newcommand{\Lric}{\mathcal{R}}        
\newcommand{\Qt}{Q_\text{tu}}         
\newcommand{\R}{\mathbb{R}}             
\DeclareMathOperator{\neper}{e}         
\newcommand{\textsm}[1]{\text{\tiny{#1}}} 
\newcommand{\expec}{\mathbb{E}}           
\newcommand{\var}{\operatorname{Var}}     
\newcommand{\gauss}{\mathcal{N}}          
\newcommand{\T}{\mathsf{T}}                               
\newcommand{\trace}{\operatorname{tr}}                    
\DeclarePairedDelimiterX\inprod[2]{\langle}{\rangle}%
{ #1 , #2 }                                               
\newcommand{\mineig}{\lambda_\text{min}}                  
\newcommand{\maxeig}{\lambda_\text{max}}                  
\newcommand{\Id}{I}                                       
\declaretheorem[numberlike=theorem,Refname={Assumption,Assumptions}]{assumption}
\title{On Stability of a Class of Filters for Non-Linear Stochastic Systems\thanks{Submitted to arXiv on June 10, 2020.
    \funding{Toni Karvonen was supported by the Aalto ELEC Doctoral School, the Foundation for Aalto University Science and Technology, and the Lloyd’s Register Foundation programme on data-centric engineering at the Alan Turing Institute, United Kingdom; Simo S\"{a}rkk\"{a} by the Academy of Finland; and Eric Moulines by the chair ``BayeScale - P.\ Laffitte''.}}}
\author{Toni~Karvonen\thanks{Department of Electrical Engineering and Automation, Aalto University, Finland and the Alan Turing Institute, United Kingdom
    (\email{tkarvonen@turing.ac.uk}).}
\and Silv\`{e}re~Bonnabel\thanks{Mines ParisTech, PSL Research University, Centre for Robotics, Paris, France
  (\email{silvere.bonnabel@mines-paristech.fr}).}
\and Eric~Moulines\thanks{Centre de Math\'{e}matiques Appliqu\'{e}es, \'{E}cole Polytechnique, Palaiseau, France
  (\email{eric.moulines@polytechnique.edu}).}
\and Simo~S\"{a}rkk\"{a}\thanks{Department of Electrical Engineering and Automation, Aalto University, Finland
  (\email{simo.sarkka@aalto.fi}).}
}
\begin{document}

\maketitle

\begin{abstract}
  This article develops a comprehensive framework for stability analysis of a broad class of commonly used continuous and discrete time-filters for stochastic dynamic systems with non-linear state dynamics and linear measurements under certain strong assumptions. The class of filters encompasses the extended and unscented Kalman filters and most other Gaussian assumed density filters and their numerical integration approximations. The stability results are in the form of time-uniform mean square bounds and exponential concentration inequalities for the filtering error. In contrast to existing results, it is not always necessary for the model to be exponentially stable or fully observed.
    We review three classes of models that can be rigorously shown to satisfy the stringent assumptions of the stability theorems. Numerical experiments using synthetic data validate the derived error bounds.
\end{abstract}

\begin{keywords}
  non-linear systems, Kalman filtering, non-linear stability analysis
\end{keywords}

\begin{AMS}
  37N35, 60G35, 60H10, 93B05, 93B07, 93D20, 93E15
\end{AMS}

\section{Introduction}

Non-linear Kalman filters, such as the extended Kalman filter~(EKF), the derivative-free unscented Kalman filter (UKF), and other Gaussian integration filters, are fundamental tools widely used in estimating a latent time-evolving state from partial and noisy measurements in, for instance, automatic control, robotics, and signal processing~\cite{Sarkka2013}. These filters are local extensions to the classical linear Kalman filter for systems with non-linear state evolution or measurement equation.
Stability properties of the optimal linear Kalman filter are well understood, having been extensively studied since the 1960s in continuous~\cite{Bucy1967,Delyon2001,BishopDelMoral2017} and discrete time~\cite{DeystPrice1968,Jazwinski1970,AndersonMoore1981} settings. However, most systems of interest are non-linear, and non-linear extensions of the Kalman filter inherit no global optimality properties. Even though these filters tend to provide useful estimates, analyzing their stability is far from trivial.

This article analyzes stochastic stability, defined as time-uniform boundedness of the mean square filtering error, of a large class of extensions of the Kalman filter for systems with non-linear state dynamics and linear measurements.
Our main stability results, \Cref{thm:ConInEq,thm:ConInEqDisc}, provide time-uniform mean square filtering error bounds  and related exponential concentration inequalities for a large class of filters. The theorems significantly extend the recent results by Del Moral et al.\@~\cite{DelMoralKurtzmannTugaut2016} on the EKF for exponentially stable (i.e., contractive) and fully observed models. The most important extensions are of three types: (a) we formulate an apparently novel framework that allows for considering a large class of commonly used filters simultaneously, not just the EKF; (b) we do not require that the model be exponentially stable or fully observed; and (c) we also cover the discrete-time case. In practice, generalization (b) is enabled by introduction of a certain assumption on stability of the filter process. That this assumption is satisfied by some classes of models not exponentially stable nor fully observed, and thus beyond the scope of applicability of~\cite{DelMoralKurtzmannTugaut2016} even when the EKF is used, is demonstrated in \Cref{sec:examples}.
Even though the stability assumptions used in this article are extremely restrictive and essentially amount to stability of the filtering error process, we stress that they are, as far as we are aware of, the least restrictive among assumptions used in the literature that permit rigorous a priori assessment of stability.
A more detailed presentation of our contributions is provided in \Cref{sec:contributions}.

There are two underlying objectives in this article that are not present in most previous works:
\begin{itemize}
\item[(i)] Our stability analysis is \emph{general} and \emph{unified} in that the class of filters it applies to encompasses most non-linear Kalman filters commonly used in relatively low-dimensional applications, such as tracking.
Ensemble Kalman filters, useful in high-dimensional applications, have been recently analyzed in~\cite{deWiljesReichStannat2017,DelMoralKurtzmannTugaut2017}.
\item[(ii)] We require that stability be rigorously \emph{a priori verifiable} and the error bounds \emph{a priori computable}. This means that it should be possible to conclude that a filter is stable and compute mean square error bounds before the filter is run. 
Accordingly, three classes of models for which this is possible are reviewed in \Cref{sec:examples}.
These requirements are in contrast to much of the existing literature where the results rely on opaque and difficult-to-verify assumptions~\cite{XiongZhangChan2006,LiXia2012} or no example models are provided that can be rigorously shown to satisfy the assumptions~\cite{ReifGuntherYazUnbehauen1999,ReifGuntherYazUnbehauen2000,KlugeReifBrokate2010}.
\end{itemize}
The second objective is crucial if stability results are to be applied in practice and is in some contrast to earlier work where it is occasionally suggested that (a) having values of certain parameters, as computed when the filter is run, satisfy the bounds or conditions required for stability allows for concluding that a stochastic filter is stable (e.g.,~\cite[p.\ 716]{ReifGuntherYazUnbehauen1999} and~\cite[p.\ 244]{XiongLiuZhang2009}), which is problematic if one is considering stability in mean square sense since the conditions are validated only for one particular trajectory, though more acceptable in the deterministic setting~\cite[p.\ 566]{bonnabel2015contraction}, or that (b) the true state can be assumed to remain in a compact set (see \cite[Theorem 4.1]{ReifGuntherYazUnbehauen1999} and \cite{BishopJensfelt2008}).
A consequence of this is that we work only with linear measurement models.
However, it should be noted that, out of necessity, many models that have been previously used in demonstrating stability results have linear measurements; see for example the model examined in~\cite{ReifGuntherYazUnbehauen1999,XiongZhangChan2006}.

\subsection{Previous Work and Technical Aspects}

A Kalman filter\footnote{Later on, when we want to refer specifically to continuous-time Kalman filters, we use the term \emph{Kalman--Bucy filter}.} or its non-linear extension provides, at time $t \geq 0$, an online estimate $\Xkf_t$ constructed out of a potentially partial and noisy measurement sequence $\{Y_s\}_{s=0}^t$ of the true latent state~$X_t$ of a dynamic system. The estimates are typically accompanied with positive-semidefinite matrices $P_t$, which are estimates of covariances of the estimation errors $E_t = X_t - \Xkf_t$. These matrices and the associated gain matrices $K_t$ are computed from a Riccati-type differential equation. Stability of extensions of the Kalman filter for non-linear systems can be analyzed either in a deterministic or stochastic setting. In the former case, the state dynamics and measurements are noiseless and the positive-semidefinite matrices $Q$ and $R$, which in the stochastic case would be covariances of Gaussian state and measurement noise terms, are tuning parameters. The goal is to prove that the estimation error converges to zero as $t \to \infty$. In the stochastic setting it cannot be expected that the error vanishes, and one instead (for example) attempts to prove time-uniform upper bounds or concentration inequalities for the mean square estimation error, $\expec(\norm[0]{E_t}^2)$.

There is a large body of literature on stability properties, both in continuous and discrete time, of the EKF as a non-linear observer~\cite{ReifSonnemannUnbehauen1996, BabacanOzbekEfe2008,  BoutayebRafaralahyDarouach1997, ReifSonnemannUnbehauen1998, ReifUnbehauen1999, Krener2003, MareeImslandJouffroy2016, bonnabel2015contraction, barrau2017invariant} and in the stochastic setting~\cite{ReifGuntherYazUnbehauen1999, ReifGuntherYazUnbehauen2000, Rapp2004, BishopJensfelt2008, KlugeReifBrokate2010,DelMoralKurtzmannTugaut2016} and of the UKF and related filters~\cite{XiongZhangChan2006, XiongZhangChan2007, WuHuHu2007, XiongChanZhang2007, XiongLiuZhang2009, LiXia2012,KarvonenSarkka2016}. 
However, majority of these articles attempt to be \emph{too} general, which often results in the use of assumptions that are effectively impossible to verify, especially before the filter is actually run, or in a lack of discussion on and examples of models for which the assumptions hold.
There are two principal sources of difficulty in the stability analysis of non-linear filters:

\paragraph{\textbf{Time-uniform bounds on $\pmb{P_t}$}} Analysis in most of the above articles is similar to the standard stability analysis for linear models~\cite{DeystPrice1968,Jazwinski1970} in that use is made of the Lyapunov function \sloppy{${V_t = E_t^\T P_t^{-1}E_t}$} or its variants. Once stability results have been obtained for $V_t$, time-uniform bounds on $P_t$ are necessary for concluding stability of the filter. While in the linear case $P_t$ is deterministic and bounds on this matrix follow from results on Riccati equations under certain observability and controllability conditions, in the non-linear case the local structure of most Kalman filters, arising from linearizations of some sort around the estimated trajectory, introduces a dependency of $P_t$ on the measurements and estimates. 
Consequently, the behavior of $P_t$ is difficult, if not impossible, to anticipate and control for most non-linear models and filters.

\paragraph{\textbf{Model non-linearity}} If the system is non-linear, stability analysis of a Kalman filter necessarily involves analyzing non-linear (stochastic) differential equations. This is obviously much more involved than analysis of linear differential equations. As such, the approach taken in many articles is to assume that the error associated to the linearization method used in a particular non-linear Kalman filter is ``small''. This allows for deriving a linear differential inequality for the Lyapunov function that can be easily controlled. 

When not outright assumed, boundedness of $P_t$ has been addressed essentially in two ways. If the system is fully observed, that is, $\dif Y_t = X_t \dif t + R^{1/2} \dif V_t$, there is hope for the Riccati equation to be well behaved despite the fact it depends on $\Xkf_t$ since, essentially, the quadratic correction term in the Riccati equation prevents $\Xkf_t$ and $P_t$ from drifting indefinitely; see~\cite[Section IV]{KlugeReifBrokate2010} and \cite[Section 4]{LiXia2012} for the discrete and~\cite{Karvonen2018} for the continuous-time case. Alternatively, one can consider certain difficult-to-verify non-linear extensions of the standard observability and controllability conditions~\cite{BarasBensoussanJames1988, ReifGuntherYazUnbehauen1999,ReifGuntherYazUnbehauen2000,Krener2003}. Another situation of interest is when the estimates are explicitly known to remain in a bounded region of the state space, which provides some control over the estimate-dependent terms in the Riccati equation and limits the possible values of $P_t$. See for example~\cite{BishopJensfelt2008} where stochastic stability of the EKF in a robotics application is considered. Model non-linearity is often dealt with by enforcing Lipschitz-type bounds on the remainder related to the particular linearization method used~\cite{ReifGuntherYazUnbehauen1999, ReifGuntherYazUnbehauen2000} or by assuming boundedness of certain residual-correcting random matrices~\cite{XiongZhangChan2006,XiongLiuZhang2009,LiXia2012}. However, such assumptions tend to be difficult to verify.

\subsection{Contributions}\label{sec:contributions}

This article follows the approach taken recently by Del Moral et al.\ \cite{DelMoralKurtzmannTugaut2016}. They study stochastic stability of the extended Kalman--Bucy filter by directly considering the squared error $\norm[0]{E_t}^2$ for which they derive stochastic differential inequalities that in turn establish time-uniform mean square error bounds and exponential concentration inequalities. However, the class of systems they consider is very restricted as they need to assume that the state is fully observed and the dynamic model, as specified by the drift function, is exponentially stable (i.e., the deterministic homogeneous differential equation $\partial_t x_t = f(x_t)$ defined by the drift $f$ is exponentially stable).
In \Cref{thm:ConInEq,thm:ConInEqDisc} we introduce several significant generalizations and improvements to the results in~\cite{DelMoralKurtzmannTugaut2016}:
\begin{enumerate}
\item We consider a broad class, defined in \Cref{sec:filters}, of generic Kalman-type filters for continuous-time non-linear systems when the measurements are linear; see~\eqref{eq:model} for the model. As demonstrated in \Cref{sec:kalmanBucy}, this class of filters contains many commonly used filters, including the extended Kalman--Bucy filter and the more recent Gaussian integration filters such as the unscented Kalman--Bucy filter~\cite{julier1997new,Sarkka2007} and the Gauss--Hermite filter~\cite{WuHuWuHu2006,SarkkaSarmavuori2013}. This unified framework is exceedingly convenient as every filter does not have to be analyzed individually. There have been prior attempts at establishing a unified stability analysis~\cite{WuHuHu2007, XiongChanZhang2007, KarvonenSarkka2016}, but the formulations are somewhat unnatural, being in terms of certain residual terms that are difficult to control.

\item Unlike in~\cite{DelMoralKurtzmannTugaut2016}, the system is not explicitly required to be exponentially stable or fully observed. 
While still very stringent, the assumption we use is satisfied by a larger class of models (for discussion on the assumptions, see \Cref{sec:stability}). Two model and filter classes which do not require exponential stability or full observability are reviewed in \Cref{sec:covInf,sec:IntVel}.

\item Although our main focus is on continuous-time systems, \Cref{sec:discrete} contains analogous results for the discrete-time case. The discrete case is instructive in demonstrating rigorously that under appropriate conditions a non-linear Kalman filter improves upon the trivial estimator $\Xkf_{Y,k} = Y_k$. This is discussed in \Cref{sec:meas-acc}.

\item \Cref{sec:numerics} contains two numerical examples that demonstrate conservativeness of the derived mean square error bounds. Unlike in much of the literature (e.g.,~\cite[Section~V]{ReifGuntherYazUnbehauen1999} and~\cite[Section~5]{ReifGuntherYazUnbehauen2000}), we can verify beforehand that the example models satisfy the stability assumptions.
  
\end{enumerate}
Although some elements of the proofs are similar to those in~\cite{DelMoralKurtzmannTugaut2016}, inclusion of complete and self-contained proofs is necessary because our adoption of a general class of filters introduces modifications, some of the constants involved are different, and also the discrete case, for which the analysis has not been carried out before, is considered.


\section{Non-Linear Systems and Filtering} \label{sec:filters}

This section introduces the continuous-time stochastic dynamic systems and the class of stochastic Kalman--Bucy filters the results in \Cref{sec:stability} apply to.
A number of prominent members of this filter class are also given.
Discrete-time systems and filters are discussed in \Cref{sec:discrete}.

\subsection{Logarithmic Norms and Lipschitz Constants}

The smallest and largest eigenvalues of a symmetric real matrix $A$ are $\mineig(A)$ and $\maxeig(A)$. The \emph{logarithmic norm} $\mu(A)$ of a square matrix $A \in \R^{d \times d}$ is \sloppy{${\mu(A) = \frac{1}{2} \maxeig (A + A^\T)}$},
coinciding with $\maxeig(A)$ when $A$ is symmetric. We also define the quantity \sloppy{${\nu(A) = \frac{1}{2} \mineig (A + A^\T) = -\mu(-A)}$}.
Basic results that we repeatedly use are
\begin{equation*}
\nu(A) \norm[0]{x}^2 \leq \inprod{x}{Ax} = x^\T Ax \leq \mu(A) \norm[0]{x}^2
\end{equation*}
for any $x \in \R^d$ and the ``triangle inequalities''
\begin{align*}
\maxeig(A + B) &\leq \maxeig(A) + \maxeig(B), & \mineig(A+B) &\geq \mineig(A) + \mineig(B), \\
\mu(A + B) &\leq \mu(A) + \mu(B), & \nu(A+B) &\geq \nu(A) + \nu(B).
\end{align*}
For a positive-semidefinite $B$, recall also the trace inequality~\cite[Chapter 8]{Bernstein2009}
\begin{equation}\label{eq:traceBounds}
\nu(A) \trace(B) \leq \trace(AB) \leq \mu(A) \trace(B)
\end{equation}
for any square matrix $A$ and its special case $\mineig(A) \trace(B) \leq \trace(AB) \leq \maxeig(A) \trace(B)$ for a symmetric $A$. See~\cite{Strom1975,Soderlind2006} for detailed reviews of the logarithmic norm.

Let $g \colon \R^d \to \R^d$ be differentiable and \sloppy{${[J_g]_{ij} = \partial g_i / \partial z_j}$} its Jacobian matrix. The \emph{Lipschitz constant} of $g$ is $\norm[0]{J_g} = \sup_{x \in \R^d} \norm[0]{J_g(x)}$, where the matrix norm is the norm induced by the Euclidean norm (i.e., the spectral norm). This constant satisfies $\norm[0]{g(x) - g(x')} \leq \norm[0]{J_g} \norm[0]{x - x'}$ for any $x, x' \in \R^d$. If $\norm[0]{J_g} < \infty$, the function $g$ is \emph{Lipschitz}. The \emph{logarithmic Lipschitz constants} of $g$ are
\begin{equation*}
N(g) = \inf_{z \in \R^d} \nu[J_g(z)] \hspace{0.3cm} \text{and} \hspace{0.3cm} M(g) = \sup_{z \in \R^d} \mu[J_g(z)].
\end{equation*}
These constants satisfy
\begin{equation}\label{eq:lipBounds}
N(g) \norm[0]{x-x'}^2  \leq \inprod[\big]{x-x'}{g(x) - g(x')} \leq M(g) \norm[0]{x-x'}^2 ,
\end{equation}
for any  $x,x' \in \R^d$. Note that $M(g) \leq \norm[0]{J_g}$~\cite[Proposition~3.1]{Soderlind2006}.

\subsection{System Description}

We consider systems of stochastic differential equations of the form
\begin{subequations}\label{eq:model}
\begin{align}
\dif X_t &= f(X_t) \dif t + Q^{1/2} \dif W_t, \label{eq:modelState} \\
\dif Y_t &= H X_t \dif t + R^{1/2} \dif V_t,
\end{align}
\end{subequations}
where $X_t \in \R^{d_x}$ is the latent \emph{state} evolving according to a continuously differentiable and potentially non-linear \emph{drift} $f \colon \R^{d_x} \to \R^{d_x}$. We assume that the drift is Lipschitz (i.e., $\norm[0]{J_f} < \infty$) and that its Jacobian is bounded in logarithmic norm:
\begin{equation}\label{eq:JacobianBound}
-\infty < N(f) = \inf_{x \in \R^{d_x}} \nu[J_f(x)] \quad \text{ and } \quad M(f) = \sup_{x \in \R^{d_x}} \mu[J_f(x)] < \infty.
\end{equation}
These conditions ensure that the state and the filters defined later in this section remain almost surely bounded in finite time. The \emph{measurements} $Y_t \in \R^{d_y}$ are obtained linearly through a measurement model matrix $H \in \R^{d_y \times d_x}$. Both the state and measurements are disturbed by independent multivariate Brownian motions $W_t \in \R^{d_x}$ and $V_t \in \R^{d_y}$ multiplied by positive-definite noise covariance matrices $Q \in \R^{d_x \times d_x}$ and $R \in \R^{d_y \times d_y}$. The state is initialized from $X_0 \sim \gauss(\mu_0, \Sigma_0)$ for some mean $\mu_0 \in \R^{d_x}$ and a positive-definite covariance $\Sigma_0 \in \R^{d_x \times d_x}$.

The results of this article remain valid if the time-invariant function~$f$ and matrices $H$, $Q$, and $R$ in \eqref{eq:model} are replaced with time-varying versions that satisfy appropriate regularity and uniform boundedness conditions.
For instance, with a time-varying drift $f_t$ the assumptions~\eqref{eq:JacobianBound} become
\begin{equation*}
  -\infty < \inf_{t \geq 0} \, \inf_{x \in \R^{d_x}} \nu[J_{f_t}(x)] \quad \text{ and } \quad \sup_{t \geq 0} \, \sup_{x \in \R^{d_x}} \mu[J_{f_t}(x)] < \infty.
\end{equation*}
Later in Theorem~\ref{thm:ConInEq} the crucial assumption~\eqref{eq:contractivityCondition} would be replaced with
\begin{equation*}
  \sup_{ t \geq T} M(f_t - P_t S_t) \leq - \lambda < 0,
\end{equation*}
where $S_t = H_t R_t^{-1} H_t$ is a time-varying version of the matrix $S$ in~\eqref{eq:definition-K-S}, and $\trace(Q)$ and $\trace(S)$ on the right-hand side of the mean square error bound~\eqref{eq:MainMeanSquare} would become $\sup_{t \geq T} \trace(Q_t)$ and $\sup_{t \geq T} \trace(S_t)$.
We work in the time-invariant setting in order to keep the notation simpler.

\subsection{The Extended Kalman--Bucy Filter} \label{sec:ekf}

The extended Kalman--Bucy filter (EKF) is a classical method for computing estimates $\Xkf_t$ of the latent state $X_t$ of the system~\eqref{eq:model}. The EKF is based on local first-order linearizations around the estimated states. It is defined by the equations
\begin{subequations}\label{eq:ekf}
\begin{align}
\dif \Xkf_t &= f(\Xkf_t) \dif t + P_t H^\T R^{-1} \big( \dif Y_t - H \Xkf_t \dif t \big), \\
\partial_t P_t &= J_f(\Xkf_t) P_t + P_t J_f(\Xkf_t)^\T + \Qt - P_t S P_t, \label{eq:riccati-ekf}
\end{align}
\end{subequations}
where
\begin{equation}
\label{eq:definition-K-S}
K_t = P_t H^\T R^{-1} \hspace{0.5cm} \text{and} \hspace{0.5cm} S = H^\T R^{-1} H,
\end{equation}
the former of which are known as \emph{Kalman gain} matrices.
Equation \eqref{eq:riccati-ekf} governing evolution of $P_t$ is known as the (non-linear) \emph{Riccati equation}. The matrix $\Qt$ is a positive-definite matrix that does not have to be equal to $Q$, the state noise covariance, in which case we can speak of \emph{tuning} this matrix~\cite{BolognaniTubianaZigliotto2003}. The rest of this section introduces a framework for generalized Kalman-type filters similar in structure to the EKF and amenable to a unified stability analysis.

\subsection{A Class of Generic Filters for Non-Linear Systems}\label{sec:filters-generic}

A \emph{filter} computes a quantity $\Xkf_t \in \R^{d_x}$ that is used as an estimate of the latent state $X_t$. We consider \emph{generic} filters defined as
\begin{equation}\label{eq:filter}
\dif \Xkf_t = \Lin_{\Xkf_t,P_t} (f) \dif t + P_t H^\T R^{-1} \big( \dif Y_t - H \Xkf_t \dif t \big),
\end{equation}
where $\Lin_{x,P}$ is a parametrized linear functional, to be discussed in detail below, that maps functions $g \colon \R^{d_x} \to \R^{d_x}$ to $\R^{d_x}$ and the matrices $P_t \in \R^{d_x \times d_x}$ are user-specified, can depend on all the system parameters as well as all preceding measurements and state estimates, and are measurable with respect to the $\sigma$-algebra $\mathcal{F}_t = \sigma(Y_s, s \leq t)$ generated by the measurements. An assumption that $P_t$ is sufficiently regular and well-behaved and Lipschitzianity in $x$ and $P$ of $\Lin_{x,P}(f)$ guarantee the existence of a unique solution to~\eqref{eq:filter}. Explicit examples of filters follow in Section~\ref{sec:kalmanBucy}.
We initialize the filter~\eqref{eq:filter} with a deterministic $\Xkf_0 = \hat{x}_0 \in \R^{d_x}$ and a positive-definite $P_0 \in \R^{d_x \times d_x}$. These do not have to be equal to $\mu_0$ or $\Sigma_0$, respectively, the mean and covariance of the initial state $X_0$.
In \Cref{sec:stability} we will see that, as long as they remain uniformly bounded, the construction of the matrices $P_t$ does not substantially affect our analysis.

The linear functional $\Lin_{x,P}$ is parametrized by $x \in \R^{d_x}$ and $P \in \R^{d_x \times d_x}$ and it is required that the functional
\begin{enumerate}
\item[(i)] is Lipschitz (and hence continuous) in the parameters $x$ and $P$ in the sense that $\Lin_{x,P}(g)$ is a Lipschitz function from $\R^{d_x} \times \R^{d_x \times d_x}$ to $\R^{d_x}$ for any fixed Lipschitz function $g \colon \R^{d_x} \to \R^{d_x}$;
\item[(ii)] satisfies $\Lin_{x,P}(g) = g(x)$ for any $x$ and $P$ if $g(x) = Ax + b$ for some $A \in \R^{d_x \times d_x}$ and $b \in \R^{d_x}$.
\end{enumerate}
Note that it is \emph{not} necessary for $\Lin_{x,P}$ to depend on $P$; a prototypical example is the standard point evaluation functional $\Lin_{x,P}^\textsm{EKF}(g) = g(x)$ for any $P$. Another functional that is used in this article is the Gaussian integration functional
\begin{equation*}
\begin{split}
  \Lin_{x,P}^\textsm{ADF}(g) &= \int_{\R^{d_x}} g(z) \gauss(z \mid x, P)\dif z \\
  &\coloneqq (2\pi)^{-d_x/2} \det (P)^{-1/2} \int_{\R^{d_x}} g(z) \exp\bigg( \!\! - \frac{1}{2} (z - x)^\T P^{-1} (z - x) \bigg) \dif z.
\end{split}
\end{equation*}
The above requirements on $\Lin_{x,P}$ are usually easily verifiable and non-restrictive. The following less straightforward assumption is crucial to the stability analysis in \Cref{sec:stability}.

\begin{assumption}\label{ass:Lt} For any differentiable $g \colon \R^{d_x} \to \R^{d_x}$ with finite $N(g)$ and $M(g)$ there is a constant $C_g \geq 0$, which varies continuously with $M(g)$ and $N(g)$, such that
\begin{equation*}
\inprod[\big]{x-\tilde{x}}{g(x) - \Lin_{\tilde{x},P}(g)} \leq M(g) \norm[0]{x-\tilde{x}}^2 + C_g \trace(P)
\end{equation*}
for any  $x,\tilde{x} \in \R^{d_x}$ and  $P \in \R^{d_x \times d_x}$.
\end{assumption}
Since $\inprod{x-{\tilde{x}}}{g(x)-g({\tilde{x}})} \leq M(g) \norm[0]{x-{\tilde{x}}}^2$ by \eqref{eq:lipBounds}, what the above assumption essentially entails is that $\Lin_{\tilde{x},P}(g)$ cannot deviate too much from $g(\tilde{x})$ and that magnitude of their difference is controlled by the size of $P$.

The class of filters of the form~\eqref{eq:filter} that use a linear functional satisfying \Cref{ass:Lt} is very large. It encompasses, for example, the extended Kalman--Bucy filter and Gaussian assumed density filters and their most popular numerical integration approximations. Next we review a few such examples, demonstrating in the process that \Cref{ass:Lt} is indeed reasonable and fairly natural.

\subsection{Kalman--Bucy Filters for Continuous-Time Non-Linear Systems}\label{sec:kalmanBucy}

A Kalman--Bucy filter for the model~\eqref{eq:model} computes approximations $\Xkf_t$ and $P_t$, latter of which is called \emph{error covariance} in this setting, to the conditional filtering means and covariances $\expec( X_t \mid \mathcal{F}_t)$ and $\var( X_t \mid \mathcal{F}_t)$, respectively. 
It is usually difficult to derive tractable expressions for these quantities unless $f$ is affine.
A \emph{generalized} Kalman--Bucy filter for the model~\eqref{eq:model} is
\begin{subequations}\label{eq:kalmanbucy}
\begin{align}
\dif \Xkf_t &= \Lin_{\Xkf_t,P_t}(f) \dif t + P_t H^\T R^{-1} \big( \dif Y_t - H \Xkf_t \dif t \big), \label{eq:KFestimate} \\
\partial_t P_t &= \Lric_{\Xkf_t,P_t}(f) + \Lric_{\Xkf_t, P_t}(f)^\T + \Qt - P_t S P_t, \label{eq:riccati}
\end{align}
\end{subequations}
where the linear functional $\Lric_{x,P}$ maps functions to $d_x \times d_x$ matrices. A unique solution to~\eqref{eq:riccati} exists if $\Lric_{x,P}(f)$ is Lipschitz in $x$ and $P$. This holds typically when the Jacobian of $f$ satisfies $\norm[0]{J_f(x) - J_f(x')}\leq L \norm[0]{x-x'}$ for some $L < \infty$ and all $x,x' \in \R^{d_x}$. Examples of commonly used $\Lric_{x,P}$ appear below. As in the case of the EKF, we call~\eqref{eq:riccati} a Riccati equation and $\Qt$ is a positive-definite tuning matrix. As we shall see, proper tuning (in practice, inflation) is often necessary to induce provable stability of a Kalman--Bucy filter. Next we provide three examples of classical Kalman--Bucy filters of the form~\eqref{eq:kalmanbucy} that satisfy the assumptions in \Cref{sec:filters-generic}.

\subsubsection{Extended Kalman--Bucy Filter}

By selecting $\Lin_{x,P}(g) = \Lin_{x,P}^\textsm{EKF}(g) = g(x)$ and $\Lric_{x,P}(g) = \Lric_{x,P}^\textsm{EKF}(g) \coloneqq J_g(x) P$ we observe that the EKF in~\eqref{eq:ekf} is an example of a generalized Kalman--Bucy filter. Furthermore, Assumption~\ref{ass:Lt} is trivially satisfied by $\Lin_{x,P}^\textsm{EKF}$ with $C_g = 0$ for any function $g$. 

\subsubsection{Gaussian Assumed Density Filters}\label{sec:slf}

In \emph{Gaussian assumed density filters}~\cite{ItoXiong2000}, the point evaluations of the model functions and their Jacobians in the EKF are replaced with Gaussian expectations with mean $\Xkf_t$ and variance $P_t$. That is,
\begin{equation} \label{eq:L-slf}
\Lin_{x,P}(g) = \Lin_{x,P}^\textsm{ADF}(g) = \expec_{\gauss(x,P)}(g) \coloneqq \int_{\R^{d_x}} g(z) \, \gauss(z \mid x, P) \dif z
\end{equation}
and 
\begin{equation*}
  \Lric_{x,P}(g) = \Lric_{x,P}^\textsm{ADF}(g) \coloneqq \expec_{\gauss(x,P)}(J_g) P = \bigg( \int_{\R^{d_x}} J_g(z) \, \gauss(z \mid x, P) \dif z \bigg) P, 
\end{equation*}
where the integrals are element-wise.
Both of the basic properties  required of $\Lin_{x,P}$ in \Cref{sec:filters} hold. It can be shown that Assumption~\ref{ass:Lt} holds with $C_g = M(g) - N(g) \geq 0$; the straightforward proof is presented in Appendix~\ref{app:int-quad}.

\subsubsection{Gaussian Integration Filters}\label{sec:ukf}

Gaussian expectations required in implementation of the Gaussian assumed density filter  are typically unavailable in closed form, necessitating the use of numerical integration formulas. We call such filters \emph{Gaussian integration filters}. Popular alternatives include fully symmetric formulas, such as the ubiquitous unscented transform~\cite{julier1997new,Sarkka2007}, and tensor-product rules~\cite{WuHuWuHu2006,SarkkaSarmavuori2013}.

A Gaussian integration filter replaces the Gaussian expectations occurring in the Gaussian assumed density filter with numerical cubature approximations
\begin{equation}\label{eq:cubature}
\Lin_{x,P}^\textsm{int}(g) = \sum_{i=1}^n w_i g\big(x + \sqrt{P} \xi_i\big) \approx \expec_{\gauss(x,P)}(g),
\end{equation}
where $\xi_1,\ldots,\xi_n \in \R^{d_x}$ and $w_1,\ldots,w_n \in \R$ are user-specified \emph{unit sigma-points} and \emph{weights}, respectively, and $\sqrt{P}$ is some form of symmetric matrix square root of~$P$. 
The integral $\expec_{\gauss(x,P)}(J_g) P$ in the assumed density filter is replaced with $\Lric_{x,P}^\textsm{int}(g) = \sum_{i=1}^n w_i g\big(x + \sqrt{P} \xi_i\big)\xi_i^\T \sqrt{P}$, which makes use of Stein's identity
\begin{equation*}
\expec_{\gauss(x,P)}(J_g) P = \int_{\R^{d_x}} g(z) \big( z - x \big)^\T \gauss(z \mid x, P) \dif z.
\end{equation*}
Obviously, it is not necessary to use the same numerical integration scheme in $\Lin_{x,P}^\textsm{int}$ and $\Lric_{x,P}^\textsm{int}$.
In Appendix~\ref{app:int-quad} it is shown that \Cref{ass:Lt} holds with $C_g = M(g) - N(g)$ if the weights are non-negative and
\begin{equation}\label{eq:quadraticExact}
\Lin_{x,P}^\textsm{int}(p) = \expec_{\gauss(x,P)}(p)
\end{equation}
whenever $p \colon \R^d \to \R$ is a $d_x$-variate polynomial of total degree at most two. Among many other filters,~\eqref{eq:quadraticExact} is satisfied by the aforementioned Kalman--Bucy filters based on the unscented transform and Gaussian tensor-product rules. Filters that do not satisfy this assumption include kernel-based Gaussian process cubature filters~\cite{SarkkaEtal2016, PruherStraka2018}.

\subsubsection{On Ensemble Kalman--Bucy Filters}

The ensemble Kalman--Bucy filter for non-linear systems (e.g.,~\cite{TongMajdaKelly2016,deWiljesReichStannat2017}) is closely related to a Gaussian integration filter. The ensemble filter uses time-varying empirical estimate operators
\begin{equation*}
  \Lin_{x,P}^\textsm{int}(g, t) = \frac{1}{n} \sum_{i=1}^n g(\xi_{i,t}) \quad \text{ and } \quad \Lric_{x,P}^\textsm{int}(g, t) = \frac{1}{n-1} \sum_{i=1}^n (g(\xi_{i,t}) - x)(\xi_{i,t} - x)^\T,
\end{equation*}
where the time-varying and random sigma-points $\xi_{i,t}$ obey a differential equation derived from the model.
Without modifications our results do \emph{not} apply to filters of this type because $\Lin_{x,P}^\textsm{int}(\cdot,t)$ does not necessarily satisfy~\eqref{eq:quadraticExact} for second-degree polynomials and all $t \geq 0$.


\section{Stability of Kalman--Bucy Filters}\label{sec:stability}

The main result, \Cref{thm:ConInEq}, of this article contains an upper bound on the mean square filtering error and an associated exponential concentration inequality.
Similar exponential concentration inequalities have previously appeared in~\cite{DelMoralKurtzmannTugaut2016} for the extended Kalman--Bucy filter and in~\cite{DelMoralKurtzmannTugaut2017} for the ensemble Kalman--Bucy filter. See also~\cite{DelMoralTugaut2018,BishopDelMoral2017} for work regarding the linear case and~\cite{deWiljesReichStannat2017} for analysis, somewhat similar to ours, for ensemble Kalman--Bucy filters.

\Cref{thm:ConInEq} is based on the evolution equation
\begin{equation}\label{eq:filtering-error-evolution}
\dif E_t = \big[ f(X_t) - \Lin_{\Xkf_t,P_t}(f) - P_t S (X_t - \Xkf_t) \big] \dif t + Q^{1/2} \dif W_t - P_t H^\T R^{-1/2} \dif V_t
\end{equation}
for the \emph{filtering error} $E_t = X_t - \Xkf_t$ of the generic filter~\eqref{eq:filter}. This equation is derived by differentiating $E_t$, inserting the formulae for $\dif X_t$, $\dif Y_t$, and $\dif \Xkf_t$ from \eqref{eq:model} and \eqref{eq:filter} into the resulting stochastic differential equation, and recalling that $S = H^\T R^{-1} H$. The proof is given in \Cref{appendix:ConcProofs}. Observe that in the following $f - P_t S$ stands for the function $x \mapsto f(x) - P_t S x$.

\begin{theorem}\label{thm:ConInEq} Consider the generic filter~\eqref{eq:filter} for the continuous-time model~\eqref{eq:model} and let $\Lin_{x,P}$ satisfy \Cref{ass:Lt}. Suppose that there are positive constants $\lambda_P$ and $\lambda$ and time $T \geq 0$ such that $\sup_{t \geq 0} \trace(P_t) \leq \lambda_P$ and
\begin{equation}\label{eq:contractivityCondition}
M(f - P_t S) = \sup_{x \in \R^{d_x}} \mu\big[ J_f(x) - P_t S \big] \leq - \lambda < 0
\end{equation}
holds for every $t \geq T$ almost surely. Denote $\beta(\delta) = \neper ( \sqrt{2 \delta} + \delta )$.
Then there are non-negative constants $C_\lambda$ (continuously dependent on $\lambda$, $M(f)$, $N(f)$, $\trace(S)$, and $\lambda_P$) and $C_T$ such that, for any $t \geq T$ and $\delta > 0$, we have the \textbf{exponential concentration inequality}
\begin{equation}\label{eq:MainConcIneq}
  \mathbb{P}\Bigg[ \norm[0]{E_t}^2 \geq \bigg( C_T \neper^{-2\lambda(t-T)} + \frac{\trace(Q) + 2C_\lambda \lambda_P + \trace(S)\lambda_P^2}{2\lambda}\bigg) \beta(\delta) \Bigg] \leq \neper^{-\delta}
\end{equation}
and the \textbf{mean square filtering error bound}
\begin{equation}\label{eq:MainMeanSquare}
\expec\big( \norm[0]{E_t}^2 \big) \leq \expec(\norm[0]{E_T}^2) \neper^{-2\lambda(t-T)} + \frac{\trace(Q) + 2C_\lambda \lambda_P + \trace(S)\lambda_P^2}{2\lambda}.
\end{equation}
\end{theorem}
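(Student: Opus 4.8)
The plan is to apply Itô's formula to the squared filtering error $m_t \coloneqq \norm[0]{E_t}^2$ using the evolution equation~\eqref{eq:filtering-error-evolution}, reduce the drift of $m_t$ --- via \Cref{ass:Lt} and the contractivity hypothesis~\eqref{eq:contractivityCondition} --- to a scalar linear stochastic differential inequality, and then extract the mean square bound~\eqref{eq:MainMeanSquare} by a Grönwall argument and the concentration inequality~\eqref{eq:MainConcIneq} by an exponential supermartingale argument.

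Itô's formula applied to~\eqref{eq:filtering-error-evolution} gives
\begin{equation*}
\dif m_t = \Big[ 2\inprod[\big]{E_t}{f(X_t) - \Lin_{\Xkf_t,P_t}(f) - P_t S E_t} + \trace(Q) + \trace(P_t S P_t) \Big]\dif t + \dif M_t,
\end{equation*}
where $M_t$ is the local martingale with quadratic variation $\dif\langle M\rangle_t = 4\inprod[\big]{E_t}{(Q + P_t S P_t)E_t}\dif t$. The algebraic crux is that, since $\Lin_{x,P}$ is linear and reproduces affine maps, $\Lin_{\Xkf_t,P_t}(f - P_t S\,\cdot\,) = \Lin_{\Xkf_t,P_t}(f) - P_t S \Xkf_t$, so the inner product above is exactly $\inprod[\big]{X_t - \Xkf_t}{g_t(X_t) - \Lin_{\Xkf_t,P_t}(g_t)}$ with $g_t = f - P_t S$. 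Applying \Cref{ass:Lt} pathwise to $g_t$ (legitimate because $P_t$ is $\mathcal{F}_t$-measurable and the bound there holds for each fixed matrix) and using~\eqref{eq:contractivityCondition} yields, for $t \geq T$, that $\inprod[\big]{E_t}{g_t(X_t) - \Lin_{\Xkf_t,P_t}(g_t)} \leq -\lambda m_t + C_{g_t}\trace(P_t)$. Since $M(g_t) \leq -\lambda$ and $N(g_t) \geq N(f) - \lambda_P\trace(S)$, the pair $(M(g_t),N(g_t))$ stays in a fixed compact set for every $t \geq T$ and every outcome, so continuity of $C_g$ in $(M(g),N(g))$ gives a uniform bound $C_{g_t} \leq C_\lambda$, with $C_\lambda$ depending continuously on $\lambda$, $M(f)$, $N(f)$, $\trace(S)$, and $\lambda_P$. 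Together with $\trace(P_t) \leq \lambda_P$ and $\trace(P_t S P_t) \leq \maxeig(S)\trace(P_t)^2 \leq \trace(S)\lambda_P^2$ (from~\eqref{eq:traceBounds} and positive-semidefiniteness of $P_t$), this produces, for $t \geq T$,
\begin{equation*}
\dif m_t \leq \big(\alpha - 2\lambda m_t\big)\dif t + \dif M_t, \qquad \dif\langle M\rangle_t \leq \kappa\, m_t\, \dif t,
\end{equation*}
with $\alpha = \trace(Q) + 2 C_\lambda\lambda_P + \trace(S)\lambda_P^2$ --- exactly the numerator appearing in~\eqref{eq:MainMeanSquare} --- and $\kappa = 4\big(\maxeig(Q) + \lambda_P^2\maxeig(S)\big) \leq 4\alpha$.

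The mean square bound~\eqref{eq:MainMeanSquare} follows by applying Itô's formula to $\neper^{2\lambda t}m_t$, integrating over $[T,t]$, localizing with the stopping times $\tau_n = \inf\{s \geq T : m_s \geq n\}$ so that the stochastic integral is a genuine martingale, taking expectations, letting $n \to \infty$ --- valid because $m$ is almost surely bounded on compact time intervals under the standing regularity assumptions on the model and the filter --- and using Fatou's lemma; this gives $\neper^{2\lambda t}\expec(m_t) \leq \neper^{2\lambda T}\expec(m_T) + \tfrac{\alpha}{2\lambda}\big(\neper^{2\lambda t} - \neper^{2\lambda T}\big)$, which is~\eqref{eq:MainMeanSquare} after dividing by $\neper^{2\lambda t}$.

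The concentration inequality~\eqref{eq:MainConcIneq} is the delicate part; the obstacle is that the quadratic variation of $M_t$ is itself controlled only by $m_t$, so one cannot compare a fixed exponential $\exp(\theta m_t)$ directly. (Heuristically, $m_t$ is dominated by the square-root diffusion $\dif n_t = (\alpha - 2\lambda n_t)\dif t + \sqrt{\kappa n_t}\,\dif B_t$, whose transition laws are scaled non-central $\chi^2$ distributions and whose Gamma stationary law has exactly a sub-gamma tail of the $\sqrt{2\delta} + \delta$ type.) Concretely, for a free parameter $\theta \in (0, 4\lambda/\kappa)$ one lets $\theta_s$ solve the Riccati equation $\partial_s\theta_s = 2\lambda\theta_s - \tfrac{\kappa}{2}\theta_s^2$ on $[T,t]$ with terminal value $\theta_t = \theta$; this solution is well defined, increasing in $s$, and of order $\theta\,\neper^{-2\lambda(t-T)}$ at $s = T$ when $\theta$ is small. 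With this choice, $s \mapsto \exp\!\big(\theta_s m_s - \alpha\!\int_T^s\theta_u\dif u\big)$ is a nonnegative local martingale (hence a supermartingale) on $[T,t]$, after the usual localization, so $\expec\exp(\theta m_t) \leq \expec\exp(\theta_T m_T)\,\exp\!\big(\alpha\!\int_T^t\theta_u\dif u\big)$, where $\alpha\!\int_T^t\theta_u\dif u \leq \alpha\theta/(2\lambda - \kappa\theta/2)$ and $\expec\exp(\theta_T m_T) \leq \exp\!\big(C_T\,\theta\,\neper^{-2\lambda(t-T)}\big)$, the constant $C_T$ being governed by an exponential moment of $\norm[0]{E_T}^2$ --- finite by the finite-time regularity of the model, and of the order of $\expec(\norm[0]{E_T}^2)$. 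Optimizing the resulting Chernoff bound over $\theta$ --- a Gaussian-type optimum $\theta \asymp \sqrt{\delta}$ for small $\delta$, where the inequality $\kappa \leq 4\alpha$ pins down the variance scale, and the capped value $\theta \asymp \lambda/\kappa$ for large $\delta$ --- yields~\eqref{eq:MainConcIneq} with $\beta(\delta) = \neper(\sqrt{2\delta} + \delta)$, the constant $\neper$ absorbing the coarse bounding of the $\theta$-quadratic contribution and of the capping. Tracking the admissible range of $\theta$, i.e.\ the interplay between $\lambda$ and $\kappa$, is precisely what governs the crossover between the $\sqrt{2\delta}$ and the $\delta$ regimes, and is where I expect the real work to lie.
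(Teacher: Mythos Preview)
Your derivation of the stochastic differential inequality for $m_t=\norm[0]{E_t}^2$ and of the mean square bound~\eqref{eq:MainMeanSquare} matches the paper's proof essentially line for line: It\^{o}'s formula, the rewriting via linearity of $\Lin_{x,P}$ on affine maps, \Cref{ass:Lt} applied to $g_t=f-P_tS$, the uniform bound on $C_{g_t}$ through the lower bound $N(g_t)\ge N(f)-\trace(S)\lambda_P$, and Gr\"onwall.

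For the concentration inequality~\eqref{eq:MainConcIneq}, however, your route is genuinely different from the paper's. The paper does \emph{not} use a time-varying exponential supermartingale. Instead it applies It\^{o}'s formula to $\norm[0]{E_t}^{2n}$ for every integer $n\ge 1$, obtains the differential inequality
\begin{equation*}
\partial_t\,\expec\big(\norm[0]{E_t}^{2n}\big) \le 2\gamma n\,\expec\big(\norm[0]{E_t}^{2n}\big) + 2un^2\,\expec\big(\norm[0]{E_t}^{2n}\big)^{1-1/n}
\end{equation*}
(with $\gamma=-\lambda$ for $t\ge T$ and $\gamma=\rho\coloneqq M(f)+\norm[0]{S}\lambda_P$ for $t\in[0,T]$), integrates this via an ODE comparison lemma for $x_t^{1/n}$ (\Cref{prop:ODEProp}), feeds in a chi-squared moment bound for the Gaussian $E_0$ (\Cref{lemma:ChiSquareLemma}) to handle the interval $[0,T]$, and arrives at $\expec(\norm[0]{E_t}^{2n}) \le \big(C_T\neper^{-2\lambda(t-T)}+u/(2\lambda)\big)^n n^n$. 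This is precisely the hypothesis of the Bernstein inequality stated as \Cref{thm:bernstein}, which then delivers $\beta(\delta)=\neper(\sqrt{2\delta}+\delta)$ directly with no optimization.

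Your CIR-type Laplace-transform argument is sound in principle, but two of the steps you leave implicit are exactly where the paper's moment method is cleaner. First, the paper never needs an exponential moment of $\norm[0]{E_T}^2$, only polynomial ones, which it manufactures on $[0,T]$ by the same mechanism; your bound $\expec\exp(\theta_T m_T)\le\exp(C_T\,\theta\,\neper^{-2\lambda(t-T)})$ requires the log-Laplace transform of $m_T$ to be controlled linearly over the full range $\theta_T\in(0,4\lambda/\kappa)$, which is true under the standing Lipschitz assumptions but is a separate lemma you would have to prove. Second, the specific profile $\beta(\delta)=\neper(\sqrt{2\delta}+\delta)$ falls out of Bernstein's inequality for free, whereas recovering exactly that form from your Chernoff optimization with the Riccati $\theta_s$ and the integral bound $\alpha\int_T^t\theta_u\,\dif u$ would require real work, as you yourself anticipate. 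In short, the paper's moment method buys explicit constants and an off-the-shelf tail inequality; your approach buys a cleaner probabilistic picture (comparison with a square-root diffusion) at the cost of heavier analysis to pin down the constants and the initial-condition contribution.
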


Several aspects of this theorem and its assumptions are discussed next.

\paragraph{\textbf{Assumption~\eqref{eq:contractivityCondition}}}

The assumption
\begin{equation*}
\sup_{t \geq T} M(f - P_t S) = \sup_{t \geq 0} \, \sup_{x \in \R^{d_x}} \mu\big[ J_f(x) - P_t S \big] < 0
\end{equation*}
is a time-uniform condition on contractivity of the filtering error process $E_t$. Indeed, it is the uniformity of this condition that enables the proof of \Cref{thm:ConInEq}. We are essentially ignoring any non-linear couplings between elements of $X_t$ that would need to be exploited were the analysis to be significantly extended and improved; see~\cite[Section 4]{DelMoralTugaut2018} for more discussion.
Even if one were to ignore issues with uniformity, the condition is still an extremely stringent one as it does not necessarily hold even for stable Kalman--Bucy filters for linear time-invariant systems. The Kalman--Bucy filter for the linear system
\begin{align*}
\dif X_t &= A X_t \dif t + Q^{1/2} \dif W_t, \\
\dif Y_t &= H X_t \dif t + R^{1/2} \dif V_t
\end{align*}
is
\begin{align*}
\dif \Xkf_t &= A \Xkf_t \dif t + P_t H^\T R^{-1} \big( \dif Y_t - H \Xkf_t \dif t \big), \\
\partial_t P_t &= A P_t + P_t A^\T + Q - P_t S P_t.
\end{align*}
Under certain observability and stabilizability conditions~\cite{Wonham1968,DeNicolaoGevers1992,CallierWinkin1995} the error covariance has a limiting steady state: $P_t \to P$ as $t \to \infty$ for the solution $P$ of the \emph{algebraic Riccati equation} $A P + P A^\T + Q - P S P = 0$.
Furthermore, the system $\partial_t x_t = (A - P S) x_t$ (i.e., homogeneous part of the linear filtering error equation) is exponentially stable in the usual sense that the eigenvalues of the system matrix are located in the left half-plane: $\alpha(A - P S) \coloneqq \max_{i=1,\ldots,d_x} \mathrm{Re} \big[ \lambda_i(A - P S) \big] < 0$.
However, the general inequality linking $\alpha(A-PS)$ and \sloppy{${M(A-PS) = \mu(A-PS)}$} is in the ``wrong'' direction~\cite[Equation (1.3)]{Soderlind2006}: $\alpha(A - PS) \leq \mu(A-PS)$.
That is, assumption~\eqref{eq:contractivityCondition} need not be satisfied even by stable filters for linear systems.
However, it often occurs that stability or forgetting theorems for non-linear filters do not completely cover the linear case; see for instance the results in~\cite{AtarZeitouni1997,BudhirajaOcone1999}.

\paragraph{\textbf{Error covariance}}
As here, the uniform boundedness of $P_t$ is assumed in almost every article on the stability of non-linear Kalman filters (though we discuss in \Cref{sec:examples} how to verify this assumption).
  What is less explicit is that in many cases the assumption~\eqref{eq:contractivityCondition} enforces a lower bound on $P_t$ because ``negativity'' of the term $-P_t S$ may be needed to ensure that $M(f-P_t S) < 0$.
  This behavior is discussed in more detail in \Cref{sec:covInf} in the context of covariance inflation.
  In literature it is in fact often explicitly assumed that the smallest eigenvalue of $P_t$ remains bounded away from zero (e.g.,~\cite{ReifGuntherYazUnbehauen2000,XiongZhangChan2006,XiongLiuZhang2009,KlugeReifBrokate2010,LiXia2012}).

\paragraph{\textbf{Constant $\pmb{C_\lambda}$}} For the EKF, the constant $C_\lambda$ is zero. For Gaussian assumed density and integration filters it was shown in \Cref{sec:slf,sec:ukf} that $C_g = M(g) - N(g)$. Because $M(f-P_t S) \leq -\lambda$ and $N(f-P_t S) \geq N(f) - \trace(S) \lambda_P$, these filters have $C_\lambda = -\lambda - N(f) + \trace(S) \lambda_P$.

\paragraph{\textbf{Dimensional dependency}}
  The error bounds of \Cref{thm:ConInEq} are strongly dependent on the dimensionality of the state space, $d_x$.
  The dimensional dependency is most clearly manifested in the term $\trace(Q)$ which grows linearly in $d_x$ if the noise variances for different dimensions are of the same order.
  From the examples in \Cref{sec:examples} it is seen that other constants in the bounds behave similarly.
  For example, in the setting of \Cref{prop:ContDrift} we have $\trace(S) = s d_x$ for $s > 0$ and $\lambda_P$ is the sum of traces of two $d_x \times d_x$ matrices.
  Note that the implications for the actual estimation error remain unclear as the bounds of \Cref{thm:ConInEq} appear to be very conservative (see~\Cref{sec:numerics}).


\section{Discrete-Time Models and Filters}\label{sec:discrete}

This section analyzes discrete-time systems and filters. First, we introduce a class of generic discrete-time filters analogous to continuous filters defined in \Cref{sec:filters} and then provide a discrete-time analog of \Cref{thm:ConInEq}. When necessary, we differentiate between the continuous and discrete cases by reserving $k$ for discrete time-indices and using an additional subscript $d$ for parameters related to the discrete case.

\subsection{A Class of Discrete-Time Filters for Non-Linear Systems}

In discrete time, we consider systems of the form
\begin{subequations}\label{eq:modelDisc}
\begin{align}
X_k &= f(X_{k-1}) + Q^{1/2} W_k, \\
Y_k &= H X_k + R^{1/2} V_k,
\end{align}
\end{subequations}
where $W_k \in \R^{d_x}$ and $V_k \in \R^{d_y}$ are independent standard Gaussian random vectors. The drift $f$ is assumed to be Lipschitz (i.e., $\norm[0]{J_f} = \sup_{x \in \R^{d_x}} \norm[0]{J_f(x)} < \infty$).
We again consider a linear functional $\Lin_{x,P}$ satisfying the basic properties listed in \Cref{sec:filters}. However, \Cref{ass:Lt} needs to be replaced with a slightly modified version.

\begin{assumption}\label{ass:LtDisc} For any differentiable $g \colon \R^{d_x} \to \R^{d_x}$ with finite $\norm[0]{J_g}$ there is a constant $C_g \geq 0$, which varies continuously with $\norm[0]{J_g}$, such that
\begin{equation*}
\norm[1]{g(x) - \Lin_{{\tilde{x}},P}(g)}^2 \leq \norm[0]{J_g}^2 \norm[0]{x - {\tilde{x}}}^2 + C_g \trace(P)
\end{equation*}
for any points $x,{\tilde{x}} \in \R^{d_x}$ and any $P \in \R^{d_x \times d_x}$.
\end{assumption}

Again, this assumption says that $\Lin_{{\tilde{x}},P}(g)$ cannot deviate too much from $g({\tilde{x}})$ since the standard Lipschitz bound is $\norm[0]{g(x) - g({\tilde{x}})} \leq \norm[0]{J_g} \norm[0]{x - {\tilde{x}}}$. A generic discrete-time filter for the system~\eqref{eq:modelDisc} produces the state estimates
\begin{equation}\label{eq:DiscGenFilter}
\Xkf_k = \Lin_{\Xkf_{k-1},P_{k-1}} (f) + P_{k \mid k-1} H^\T \big( HP_{k \mid k-1} H^\T + R \big)^{-1} \big[ Y_k - H \Lin_{\Xkf_{k-1},P_{k-1}}(f) \big],
\end{equation}
where $P_k$ and $P_{k \mid k-1}$ are user-specified positive-definite $d_x \times d_x$ matrices allowed to depend on the state estimates and measurements up to time $k-1$. 

\subsection{Kalman Filters for Discrete-Time Non-Linear Systems}\label{sec:KFdisc}

Like Kalman--Bucy filters of \Cref{sec:kalmanBucy}, a Kalman filter for the discrete-time model~\eqref{eq:modelDisc} computes approximations $\Xkf_k$ and $P_k$ to the filtering means and covariances \sloppy{${\expec( X_k \mid Y_1,\ldots,Y_k)}$} and $\var(X_k \mid Y_1,\ldots,Y_k)$. Such a filter consists of the \emph{prediction step}
\begin{subequations}\label{eq:kalmanPrediction}
\begin{align}
\Xkf_{k \mid k-1} &= \Lin_{\Xkf_{k-1},P_{k-1}}(f), \\
P_{k \mid k-1} &= \Lric_{\Xkf_{k-1},P_{k-1}}(f) + \Qt, \label{eq:Pprediction}
\end{align}
\end{subequations}
where $\Lric_{x,P}$ maps functions to positive-semidefinite matrices and $\Qt$ is again a potentially tuned version of $Q$, and the \emph{update step}
\begin{subequations}\label{eq:kalmanUpdate}
\begin{align}
K_k &= P_{k \mid k-1} H^\T \big( HP_{k \mid k-1} H^\T + R \big)^{-1}, \\
\Xkf_k &= \Xkf_{k \mid k-1} + K_k \big( Y_k - H \Xkf_{k \mid k-1} \big), \\
P_k &= ( \Id - K_k H) P_{k \mid k-1}. \label{eq:Pupdate}
\end{align}
\end{subequations}
The matrices $K_k$ are discrete-time versions of the Kalman gain matrices in~\eqref{eq:definition-K-S}.
All standard extensions of the Kalman filter for non-linear systems fit this framework. For example, $\Lin_{x,P}(g) = \Lin_{x,P}^\textsm{EKF}(g)= g(x)$ and $\Lric_{x,P}(g) = \Lric_{x,P}^\textsm{EKF(d)}(g) = J_g(x) P J_g(x)^\T$ yield the extended Kalman filter while
\begin{align*}
\Lin_{x,P}(g) &= \Lin_{x,P}^\textsm{ADF}(g) = \int_{\R^{d_x}} g(z) \gauss(z \mid x, P) \dif z, \\
\Lric_{x,P}(g) &= \Lric_{x,P}^\textsm{ADF(d)}(g) = \int_{\R^{d_x}} \big[ g(z) - \Lin_{x,P}^\textsm{ADF}(g) \big] \big[g(z) - \Lin_{x,P}^\textsm{ADF}(g) \big]^\T \gauss(z \mid x, P) \dif z
\end{align*}
correspond to discrete-time Gaussian assumed density filters. Obviously, by replacing the exact integrals with their numerical approximations we obtain different discrete-time Gaussian integration filters. For the EKF, \Cref{ass:LtDisc} holds again with $C_g = 0$, whereas similar arguments as those appearing in Appendix~\ref{app:int-quad} show that $C_g = \norm[0]{J_g}$ for the Gaussian assumed density filter and Gaussian integration filters whose numerical integration rules satisfy the second-degree exactness condition~\eqref{eq:quadraticExact}.

\subsection{Stability of Discrete-Time Filters}\label{sec:ConcDisc}

Discrete-time stability analysis that follows is based on a non-linear difference equation for the filtering error $E_k = X_k - \Xkf_k$:
\begin{equation*}
\begin{split}
E_k ={}& f(X_{k-1}) + Q^{1/2} W_k - \Xkf_{k \mid k-1} - K_k \big( Y_k - H \Xkf_{k \mid k-1} \big) \\
={}& f(X_{k-1}) - \Lin_{\Xkf_{k-1},P_{k-1}}(f) - K_k H \big( X_k - \Xkf_{k \mid k-1} \big) + Q^{1/2} W_k - K_k R^{1/2} V_k \\
={}& ( \Id - K_k H ) \big[ f(X_{k-1}) - \Lin_{\Xkf_{k-1},P_{k-1}}(f) \big] + ( \Id - K_k H) Q^{1/2} W_k - K_k R^{1/2} V_k.
\end{split}
\end{equation*}
The full proof is similar to that of Theorem~\ref{thm:ConInEq} and is given in Appendix~\ref{appendix:ConcProofDisc}.

\begin{theorem}\label{thm:ConInEqDisc} Consider the generic discrete-time filter~\eqref{eq:DiscGenFilter} for the model~\eqref{eq:modelDisc} and let $\Lin_{x,P}$ satisfy \Cref{ass:LtDisc}. Suppose that there are positive constants $\lambda_P^p$, $\lambda_P^u$, and $\lambda_d$ such that $\sup_{k \geq 0} \trace(P_{k \mid k-1}) \leq \lambda_P^p$, $\sup_{k \geq 0} \trace(P_k) \leq \lambda_P^u$,
\begin{equation}\label{eq:contractivityConditionDisc}
\sup_{k \geq 1} \norm[0]{I - K_k H} \leq \lambda_d < \infty, \quad and \quad \sup_{k \geq 1} \norm[0]{J_f} \norm[0]{I - K_k H} \leq \lambda_{df} < 1
\end{equation}
hold almost surely. Denote $\beta(\delta) = \neper ( \sqrt{2 \delta} + \delta )$ and $\kappa = \sup_{k \geq 1} \norm[0]{K_k} \leq \lambda_P^p \norm[0]{H} \norm[0]{R^{-1}}$. Then there is a non-negative constant $C_f$ such that, for any $\delta > 0$, we have the \textbf{exponential concentration inequality}
\begin{equation}\label{eq:MainConcIneqDisc}
\begin{split}
\mathbb{P}\Bigg[ \norm[0]{E_k}^2 \geq \norm 4\beta(\delta) \bigg( &\lambda_{df}^k \Big[\norm[0]{\mu_0 - \hat{x}_0} + \norm[0]{\Sigma_0}^{1/2} \Big] \bigg. \Bigg. \\
&+ \Bigg. \bigg. \frac{\sqrt{\lambda_d^2 \trace(Q) + \kappa^2 \trace(R) } + \lambda_d (C_f \lambda_P^u)^{1/2} }{1 - \lambda_{df}} \bigg)^{2} \Bigg] \leq \neper^{-\delta}
\end{split}
\end{equation}
and the \textbf{mean square filtering error bound}
\begin{equation}\label{eq:MainMeanSquareDisc}
\expec\big( \norm[0]{E_k}^2 \big) \leq \lambda_{df}^{2k} \big( \norm[0]{\mu_0 - \hat{x}_0}^2 + \trace(\Sigma_0) \big) + \frac{\lambda_d^2 [ \trace(Q) + C_f \lambda_P^u ] + \kappa^2 \trace(R)}{1-\lambda_{df}^2}.
\end{equation}
\end{theorem}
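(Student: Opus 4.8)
The plan is to convert the one-step relation for $E_k$ displayed just before the theorem into a scalar recursion for $\norm{E_k}$, and to treat the mean square bound and the concentration inequality separately from it. Write $E_k = A_k + N_k$ with \emph{signal term} $A_k = (I-K_kH)\big[f(X_{k-1}) - \Lin_{\Xkf_{k-1},P_{k-1}}(f)\big]$ and \emph{noise term} $N_k = (I-K_kH)Q^{1/2}W_k - K_kR^{1/2}V_k$, and let $\mathcal{G}_{k-1}$ be the $\sigma$-algebra generated by $X_0$ and $W_1,V_1,\dots,W_{k-1},V_{k-1}$. Then $A_k$ together with $P_{k-1}$, $P_{k\mid k-1}$ and $K_k$ is $\mathcal{G}_{k-1}$-measurable, whereas $(W_k,V_k)$ is independent of $\mathcal{G}_{k-1}$, so conditionally on $\mathcal{G}_{k-1}$ the vector $N_k$ is centred Gaussian with covariance $(I-K_kH)Q(I-K_kH)^\T + K_kRK_k^\T$. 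Applying \Cref{ass:LtDisc} to $g=f$, together with submultiplicativity of the spectral norm, $\trace(P_{k-1})\le\lambda_P^u$, the inequalities \eqref{eq:contractivityConditionDisc}, and $\sqrt{a+b}\le\sqrt a+\sqrt b$, gives the pathwise estimate
\begin{equation*}
\norm{A_k}^2 \le \norm{I-K_kH}^2\big(\norm{J_f}^2\norm{E_{k-1}}^2 + C_f\trace(P_{k-1})\big) \le \lambda_{df}^2\norm{E_{k-1}}^2 + \lambda_d^2 C_f\lambda_P^u ,
\end{equation*}
hence also $\norm{A_k}\le\lambda_{df}\norm{E_{k-1}}+\lambda_d(C_f\lambda_P^u)^{1/2}$. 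For the noise term the trace inequality \eqref{eq:traceBounds} yields $\expec\big[\norm{N_k}^2\mid\mathcal{G}_{k-1}\big]\le\norm{I-K_kH}^2\trace(Q)+\norm{K_k}^2\trace(R)\le\lambda_d^2\trace(Q)+\kappa^2\trace(R)$, where $\kappa=\sup_k\norm{K_k}\le\lambda_P^p\norm{H}\norm{R^{-1}}$ because $HP_{k\mid k-1}H^\T+R\succeq R$ forces $\norm{(HP_{k\mid k-1}H^\T+R)^{-1}}\le\norm{R^{-1}}$ and $\norm{P_{k\mid k-1}}\le\trace(P_{k\mid k-1})\le\lambda_P^p$.

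For the mean square bound \eqref{eq:MainMeanSquareDisc} I would condition on $\mathcal{G}_{k-1}$ in the identity $\norm{E_k}^2=\norm{A_k}^2+2\inprod{A_k}{N_k}+\norm{N_k}^2$: since $\expec[N_k\mid\mathcal{G}_{k-1}]=0$ and $A_k$ is $\mathcal{G}_{k-1}$-measurable, the cross term vanishes in conditional expectation, so the two bounds above give $\expec[\norm{E_k}^2\mid\mathcal{G}_{k-1}]\le\lambda_{df}^2\norm{E_{k-1}}^2+\lambda_d^2(C_f\lambda_P^u+\trace Q)+\kappa^2\trace R$. Taking full expectations yields a geometric recursion with ratio $\lambda_{df}^2<1$, which unrolls to $\expec\norm{E_k}^2\le\lambda_{df}^{2k}\expec\norm{E_0}^2+\big(\lambda_d^2(C_f\lambda_P^u+\trace Q)+\kappa^2\trace R\big)/(1-\lambda_{df}^2)$; inserting $\expec\norm{E_0}^2=\norm{\mu_0-\hat x_0}^2+\trace(\Sigma_0)$ (valid since $X_0\sim\gauss(\mu_0,\Sigma_0)$ and $\hat x_0$ is deterministic) gives exactly \eqref{eq:MainMeanSquareDisc}.

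For the concentration inequality \eqref{eq:MainConcIneqDisc} I would instead iterate the pathwise inequality $\norm{E_k}\le\lambda_{df}\norm{E_{k-1}}+\lambda_d(C_f\lambda_P^u)^{1/2}+\norm{N_k}$, obtaining
\begin{equation*}
\norm{E_k}\le\lambda_{df}^k\norm{E_0}+\frac{\lambda_d(C_f\lambda_P^u)^{1/2}}{1-\lambda_{df}}+\sum_{j=1}^k\lambda_{df}^{k-j}\norm{N_j},
\end{equation*}
and, writing $E_0=(\mu_0-\hat x_0)+\Sigma_0^{1/2}Z_0$ with $Z_0$ standard Gaussian, $\norm{E_0}\le\norm{\mu_0-\hat x_0}+\norm{\Sigma_0^{1/2}Z_0}$. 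The right-hand side is thus a deterministic constant plus a weighted sum of norms of conditionally Gaussian vectors whose conditional covariances are bounded by deterministic quantities — in particular no longer involving $\norm{E_{j-1}}$, which is exactly why the unrolling is carried out first. Splitting each such norm into its conditional mean, bounded by the square root of the trace of the conditional covariance, plus a fluctuation, the means accumulate to the deterministic centre appearing inside the square in \eqref{eq:MainConcIneqDisc}, while the fluctuations form a martingale whose increments are conditionally sub-Gaussian with proxy variance bounded by the operator norm of the corresponding conditional covariance (the Euclidean norm being a $1$-Lipschitz function of a Gaussian). An Azuma/Bernstein-type exponential inequality — the discrete counterpart of the exponential-martingale estimate underlying \Cref{thm:ConInEq} — controls the deviation of $\norm{E_k}$ from this centre on the centre's own scale, and squaring this bound on $\norm{E_k}$ to pass to $\norm{E_k}^2$ produces the factor $\beta(\delta)=\neper(\sqrt{2\delta}+\delta)$; for the residual range of small $\delta$ where the exponential bound is not yet conclusive, the Markov inequality combined with the mean square bound \eqref{eq:MainMeanSquareDisc} closes the gap.

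The mean square bound is routine once \Cref{ass:LtDisc} and the uniform trace bounds are available; the genuine obstacle is the concentration inequality. Its difficulty is twofold: first, organising the self-referential structure so that the relevant conditional covariances are deterministically bounded (handled above by unrolling the scalar recursion before applying Gaussian tail estimates, at the cost of the triangle inequality); and second, collapsing the constants generated by the exponential inequality and by the squaring step exactly into $\beta(\delta)$ while keeping $\kappa$, $\lambda_d$, $\lambda_{df}$, $C_f$, $\lambda_P^p$ and $\lambda_P^u$ in the positions shown. One should also check that $C_f$ — equal to $0$ for the EKF and to $\norm{J_f}$ for Gaussian assumed density and integration filters, as noted in \Cref{sec:KFdisc} — enters only through \Cref{ass:LtDisc}, so that the theorem applies uniformly across this filter class.
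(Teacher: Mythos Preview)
Your derivation of the mean square bound \eqref{eq:MainMeanSquareDisc} coincides with the paper's. For the concentration inequality \eqref{eq:MainConcIneqDisc} your route differs: you unroll the pathwise recursion for $\norm[0]{E_k}$ and propose an Azuma-type sub-Gaussian martingale bound on $\sum_j\lambda_{df}^{k-j}\norm[0]{N_j}$, squaring at the end. The paper instead bounds all moments $\expec(\norm[0]{E_k}^{2n})$: Minkowski's inequality turns the recursive estimate on $\norm[0]{E_k}^2$ into a linear recursion for $\expec(\norm[0]{E_k}^{2n})^{1/(2n)}$ with ratio $\lambda_{df}$, which is solved by the discrete Gr\"onwall inequality (with Lemma~\ref{lemma:ChiSquareLemma} supplying the Gaussian moment bounds for $E_0$ and for your $N_k$, called $U_k$ there), and Bernstein's inequality (Theorem~\ref{thm:bernstein}) is then applied to $X=\norm[0]{E_k}^2$. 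Your remark that Theorem~\ref{thm:ConInEq} rests on an ``exponential-martingale estimate'' is a misreading---both the continuous and discrete proofs proceed through moment bounds followed by Bernstein, not through martingale tail inequalities. The specific shape $\beta(\delta)=\neper(\sqrt{2\delta}+\delta)$ is precisely what Theorem~\ref{thm:bernstein} delivers from the moment condition $\expec(X^n)\le n^n\alpha^n$; an Azuma bound on $\norm[0]{E_k}$ followed by squaring gives a threshold of the form $(c+\sigma\sqrt{2\delta})^2$, which has the right qualitative scaling but does not readily collapse into the multiplicative form $4\beta(\delta)\cdot(\text{centre})^2$ asserted in the theorem. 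Your approach is sound in spirit and would yield a concentration inequality of comparable strength, but reproducing the stated constants and the exact function $\beta(\delta)$ appears to require the moment route the paper takes.
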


\subsection{Accuracy of Measurements} \label{sec:meas-acc}

If the measurements are $Y_k = X_k + R^{1/2} V_k$, one can simply use them as state estimates. For certain regimes of the system parameters it can be shown that the mean square error bound of \Cref{thm:ConInEqDisc} is an improvement over that for such naive state estimators. Consider the discrete-time system~\eqref{eq:modelDisc} and suppose the measurement model is $Y_k = h X_k + \sqrt{r} V_k$ for some positive scalars $h$ and $r$. Error of the naive estimate $\Xkf_{Y,k} = Y_k / h$ is
\begin{equation*}
X_k - \Xkf_{Y,k} = X_k - Y_k/h = (\sqrt{r}/h) V_k,
\end{equation*}
which is a zero-mean Gaussian random vector with variance $r/h^2$. That is,
\begin{equation}\label{eq:MeasEstBound}
\expec\big( \norm[0]{X_k - \Xkf_{Y,k}}^2 \big) = d_y r / h^2. 
\end{equation}
If the assumptions of \Cref{thm:ConInEqDisc} hold, the mean square bound~\eqref{eq:MainMeanSquareDisc} is
\begin{equation}\label{eq:FilterBoundMeasEst}
\expec\big( \norm[0]{E_k}^2 \big) \leq \lambda_{df}^{2k} \trace(P_0) + \frac{\lambda_d^2 [ \trace(Q) + C_f \lambda_P^u ] + \kappa^2 d_y r }{1-\lambda_{df}^2},
\end{equation}
where $\lambda_{df} < 1$ and
\begin{equation*}
\kappa = \sup_{k \geq 1} \norm[0]{K_k} = \sup_{k \geq 1} \, \norm[1]{h P_{k \mid k-1}(h^2 P_{k \mid k-1} + r \Id)^{-1} } \leq \frac{h}{h^2 + r/\lambda_P^p} = \mathcal{O}(r^{-1})
\end{equation*}
as $r \to \infty$.
We observe that the bound~\eqref{eq:FilterBoundMeasEst} is smaller than~\eqref{eq:MeasEstBound} if $\trace(Q)$ and $C_f$ are sufficiently small and $r$ is sufficiently large. From \Cref{sec:KFdisc} we recall that $C_f = 0$ for the EKF and $C_f = \norm[0]{J_f}$ for the UKF and its relatives.
This result is intuitive: if there is little process noise but the measurement noise level is high, the filter is able to produce accurate estimates by following the dynamics. This also demonstrates that in the setting where \Cref{thm:ConInEqDisc} is applicable the bounds it yields are sensible.


\section{Example Models}\label{sec:examples}

This section examines three model classes for which certain Kalman filters satisfy \Cref{thm:ConInEq} or \ref{thm:ConInEqDisc}, possibly under sufficient covariance inflation. The models in \Cref{sec:contractive,sec:covInf} are \emph{fully observed}, by which we mean that $S = H^\T R^{-1} H = s \Id $ for some $s > 0$. This assumption, though admittedly strong, is commonly used in analysis of various non-linear filters~\cite{KellyLawStuart2014,DelMoralKurtzmannTugaut2016,DelMoralTugaut2018,DelMoralKurtzmannTugaut2017,deWiljesReichStannat2017}. The model in \Cref{sec:IntVel} is \emph{fully detected} in the sense that the unobserved component is exponentially stable.

\subsection{Contractive Dynamics} \label{sec:contractive}

Stability analysis in~\cite{DelMoralKurtzmannTugaut2016} was restricted to the extended Kalman--Bucy filter for fully observed models with a \emph{contractive} (or \emph{uniformly monotone}) drift: $M(f) < 0$. This section applies Theorem~\ref{thm:ConInEq} to such models. The main difference to~\cite{DelMoralKurtzmannTugaut2016} is that the class of filters the analysis applies to is significantly expanded.

Consider a generalized Kalman--Bucy filter of the form~\eqref{eq:kalmanbucy} and suppose that there is $\ell_c$ such that $M(f) \leq -\ell_c < 0$. This means that the homogeneous system $\partial x_t = f(x_t)$ is exponentially stable: $x_t \to c$ with an exponential rate as $t \to \infty$ for some $c \in \R^{d_x}$. Assume also that the matrix-valued operator $\Lric_{x,P}$ in the Riccati equation~\eqref{eq:riccati} satisfies
\begin{equation}\label{eq:LricAss}
\trace [ \Lric_{x,P}(f) ] \leq M(f) \trace(P).
\end{equation}
As shown in~\cite{Karvonen2018}, this assumption is natural and satisfied by all Kalman--Bucy filters discussed in \Cref{sec:kalmanBucy}. From this assumption it follows that
\begin{equation*}
\begin{split}
\partial_t \trace(P_t) &= \trace \big[ \Lric_{\Xkf_t,P_t}(f) + \Lric_{\Xkf_t,P_t}(f)^\T \big] + \trace(\Qt) - \trace(P_t S P_t) \\
&\leq \trace \big[ \Lric_{\Xkf_t,P_t}(f) + \Lric_{\Xkf_t,P_t}(f)^\T \big] + \trace(\Qt) \\
&\leq -2 \ell_c \trace(P_t) + \trace(\Qt).
\end{split}
\end{equation*}
Consequently, by Grönwall's inequality (see \Cref{sec:gronwall}), $\trace(P_t) \leq \lambda_{P,t} \leq \lambda_P$, where
\begin{equation*}
  \lambda_{P,t} = \neper^{-2\ell_c t} \trace(P_0) + \trace(\Qt)/(2\ell_c) \leq \lambda_P \coloneqq \trace(P_0) + \trace(\Qt)/(2\ell_c).
\end{equation*}
Furthermore, if the model is in addition fully observed,
\begin{equation*}
M(f - P_t S) \leq M(f) + s \mu(-P_t) \leq -\ell_c.
\end{equation*}
That is, the assumptions of \Cref{thm:ConInEq} are satisfied for this class of exponentially stable and fully observed models for any positive-definite $\Qt$.

\begin{proposition}\label{prop:ContDrift} Consider a generic Kalman--Bucy filter~\eqref{eq:kalmanbucy}, defined by $\Lin_{x,P}$ satisfying \Cref{ass:Lt}, for the continuous-time model~\eqref{eq:model}. Suppose that there is a positive $\ell_c$ such that $M(f) \leq - \ell_c < 0$, \sloppy{${S = H^\T R^{-1} H = s \Id}$} for some $s > 0$, and that \eqref{eq:LricAss} holds. Then \Cref{thm:ConInEq} holds with $T = 0$, $\lambda = \ell_c$, and \sloppy{${\lambda_P = \trace(P_0) + \trace(\Qt)/(2\ell_c)}$}.
\end{proposition}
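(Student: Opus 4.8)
The plan is to deduce the proposition directly from \Cref{thm:ConInEq} by checking its two quantitative hypotheses — the time-uniform trace bound on $P_t$ and the contractivity condition~\eqref{eq:contractivityCondition} — with the constants named in the statement; the regularity and measurability requirements, together with \Cref{ass:Lt}, are assumed outright. This is essentially the computation sketched in the paragraph preceding the proposition, assembled into a self-contained argument.

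First I would establish $\sup_{t \geq 0} \trace(P_t) \leq \lambda_P$. Taking the trace of the Riccati equation~\eqref{eq:riccati}, I note that $\trace(P_t S P_t) \geq 0$ because $P_t S P_t$ is positive-semidefinite, and I apply~\eqref{eq:LricAss} to $\Lric_{\Xkf_t,P_t}(f)$ and to its transpose, together with $M(f) \leq -\ell_c$ and $\trace(P_t) \geq 0$, to obtain the scalar differential inequality $\partial_t \trace(P_t) \leq -2 \ell_c \trace(P_t) + \trace(\Qt)$. Grönwall's inequality (\Cref{sec:gronwall}) then gives $\trace(P_t) \leq \neper^{-2\ell_c t}\trace(P_0) + \trace(\Qt)/(2\ell_c) \leq \trace(P_0) + \trace(\Qt)/(2\ell_c) = \lambda_P$ for every $t \geq 0$, which is exactly the claimed value and is uniform in $t$.

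Second I would verify~\eqref{eq:contractivityCondition} with $T = 0$ and $\lambda = \ell_c$. By the triangle inequality for the logarithmic norm, $\mu[J_f(x) - P_t S] \leq \mu[J_f(x)] + \mu[-P_t S]$ for every $x$, hence $M(f - P_t S) \leq M(f) + \mu[-P_t S]$. Since $S = s\Id$ with $s > 0$ and $P_t$ is positive-definite, $\mu[-P_t S] = s\,\mu(-P_t) = -s\,\mineig(P_t) \leq 0$, so $M(f - P_t S) \leq M(f) \leq -\ell_c < 0$ for all $t \geq 0$. With both hypotheses of \Cref{thm:ConInEq} now in force, the mean square error bound and the exponential concentration inequality follow with $T = 0$, $\lambda = \ell_c$, and $\lambda_P$ as above.

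There is no genuine obstacle here: the proposition is a corollary, and the only points requiring a little care are discarding the quadratic correction term $-P_t S P_t$ in the Riccati equation and getting the sign of $\mu[-P_t S]$ right, both of which rest on positive-semidefiniteness. The substantive content is that these verifications, together with \Cref{ass:Lt}, go through uniformly over the entire class of generalized Kalman--Bucy filters~\eqref{eq:kalmanbucy} — in particular over all the integration filters of \Cref{sec:kalmanBucy}, whose operators $\Lric_{x,P}$ satisfy~\eqref{eq:LricAss} — rather than only for the EKF, which is precisely the improvement over~\cite{DelMoralKurtzmannTugaut2016}.
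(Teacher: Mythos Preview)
Your proposal is correct and follows essentially the same approach as the paper: you establish the trace bound on $P_t$ via the Riccati equation, assumption~\eqref{eq:LricAss}, and Gr\"onwall, then verify~\eqref{eq:contractivityCondition} via the triangle inequality for the logarithmic norm and $\mu(-P_t S) = -s\,\mineig(P_t) \leq 0$. This is precisely the computation in the paragraph preceding the proposition, and your handling of the sign and positive-semidefiniteness issues is accurate.
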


In particular, under the assumptions of the above proposition and when using the time-dependent bound $\lambda_{P,t}$, the concentration inequality~\eqref{eq:MainConcIneq} for the EKF takes the form
\begin{equation*}
\begin{split}
 \norm[0]{E_t}^2 \geq{}& \bigg( \expec(\norm[0]{E_0}^2)\neper^{-2\ell_c t} + \frac{\trace(Q) + d_x s [\neper^{-2\ell_c t} \trace(P_0) + \trace(\Qt)/(2\ell_c)]^2}{2\ell_c} \bigg) \beta(\delta) \\
\xrightarrow[t \to \infty]{} {}& \, \bigg[ \trace(Q) + \frac{d_x s \trace(\Qt)^2}{4\ell_c^2} \bigg] \frac{1}{2\ell_c} \beta(\delta)
\end{split}
\end{equation*}
with probability smaller than $\neper^{-\delta}$. This is, up to some constants, identical to (1.13), the corresponding result in~\cite{DelMoralKurtzmannTugaut2016} (note that Del Moral et al.\ have $\Qt = Q$) which in our notation and as $t \to \infty$ is
\begin{equation*}
  \norm[0]{E_t}^2 \geq \bigg( \trace(Q) + \frac{ s \trace(Q)^2}{2\ell_c} \bigg) \frac{1}{2\ell_c^2} \omega(\delta) \quad \text{with} \quad \omega(\delta) = 4\sqrt{2} \neper^2 \bigg( \frac{1}{2} + \delta + \sqrt{\delta} \bigg).
\end{equation*}

\subsection{Covariance Inflation}\label{sec:covInf}

Intuitively, if the state is observed linearly and ``well enough'', artificial inflation of the error covariance matrix $P_t$ should make the filter more stable (or robust) since this results in less emphasis being placed on the state dynamics, which mitigates instability due to non-linearity of the drift. \emph{Covariance inflation}, sometimes known in engineering literature as \emph{robust tuning}, is an important topic in the study of ensemble Kalman filters~\cite{KellyLawStuart2014,TongMajdaKelly2016} and has been suggested also for stabilizing the discrete-time UKF~\cite{XiongZhangChan2006,XiongLiuZhang2009}. It allows for considering models whose drift is not necessarily contractive, which is a case not covered by current theory in~\cite{DelMoralKurtzmannTugaut2016}.

Suppose that $S = s \Id$ for some positive $s$. Then
\begin{equation*}
\sup_{x \in \R^{d_x}} \mu\big[ J_f(x) - P_t S \big] \leq M(f) + s\mu(-P_t) = M(f) - s \mineig(P_t),
\end{equation*}
and it is evident that for large enough $\mineig(P_t)$ this quantity becomes negative as required in \Cref{thm:ConInEq}. Specifically, $\mineig(P_t) \geq (M(f) + \lambda)/s$ is sufficient to ensure that $\sup_{x \in \R^{d_x}} \mu [ J_f(x) - P_t S ] \leq -\lambda$. This can be achieved using covariance inflation in Kalman--Bucy filters by choosing a large enough tuned dynamic noise covariance matrix $\Qt$. For simplicity, consider the extended Kalman--Bucy filter. The inversion formula $\partial_t P_t^{-1} = - P_t^{-1} ( \partial_t P_t ) P_t^{-1}$ yields the Riccati equation
\begin{equation*}
\partial_t P_t^{-1} = - P_t^{-1} J_f(\Xkf_t) - J_f(\Xkf_t)^\T P_t^{-1} + S - P_t^{-1} Q P_t^{-1}
\end{equation*}
for the inverse error covariance. By using arguments similar to those appearing in~\cite{Karvonen2018} we can prove that 
\begin{equation}\label{eq:TraceInverse}
\trace(P_t^{-1}) \leq \frac{ \sqrt{\mineig(\Qt) \maxeig(S) / d_x + N(f)^2} - N(f)}{\mineig(\Qt) / d_x} + \alpha_1 \neper^{-\beta_1 t}
\end{equation}
for some positive constants $\alpha_1$ and $\beta_2$ that depend on the system parameters.
Since $\trace(P_t^{-1}) = \sum_{i=1}^{d_x} \lambda_i(P_t)^{-1}$, \eqref{eq:TraceInverse} implies the eigenvalue bound
\begin{equation*}
\mineig(P_t) \geq \frac{1}{ \trace(P_t^{-1}) } \geq \frac{\mineig(\Qt) / d_x}{\sqrt{\mineig(\Qt) \maxeig(S) / d_x + N(f)^2 } - N(f) + \alpha_2 \neper^{-\beta_2 t} }
\end{equation*}
for some positive constants $\alpha_2$ and $\beta_2$. As this eigenvalue bound grows as square root of $\mineig(\Qt)$, the inequality \sloppy{${\mineig(P_t) \geq (M(f) + \lambda)/s}$} that induces the stability condition~\eqref{eq:contractivityCondition} is satisfied when $\mineig(\Qt)$ and $t$ are large enough.

\subsection{Integrated Velocity Models}\label{sec:IntVel}

Let $h \neq 0$, $a_2,q_1,q_2,r > 0$, and $a_1$ be constants and \sloppy{${g \colon \R \to \R}$} a continuously differentiable function such that
\begin{equation}\label{eq:GassIntVel}
N(g) = \inf_{x \in \R} g'(x) \geq \ell_g > 0
\end{equation}
for a constant $\ell_g$. Consider the \emph{integrated velocity model}
\begin{equation}\label{eq:velocityModel}
\begin{split}
\dif \begin{bmatrix} X_{t,1} \\ X_{t,2} \end{bmatrix} &= \begin{bmatrix} a_1 X_{t,1} + a_2 X_{t,2} \\ - g(X_{t,2}) \end{bmatrix} \dif t + \begin{bmatrix} q_1^{1/2} & 0 \\ 0 & q_2^{1/2} \end{bmatrix} \dif V_t, \\
\dif Y_t &= \begin{bmatrix} h & 0 \end{bmatrix} X_t \dif t + r^{1/2} \dif W_t
\end{split}
\end{equation}
for a two-dimensional state $X_t = (X_{t,1}, X_{t,2}) \in \R^2$ of which one-dimensional measurements $Y_t$ are obtained. When $a_1 = 0$, the state component $X_{t,1}$ can be interpreted as the position of a target, obtained by integrating its velocity $X_{t,2}$. Using only position measurements we then try to estimate both the position and the velocity.

We now show the extended Kalman--Bucy filter~\eqref{eq:ekf} for this model satisfies \Cref{thm:ConInEq} if there is sufficient covariance inflation.
Because
\begin{equation*}
J_f(x) = \begin{bmatrix} a_1 & a_2 \\ 0 & -g'(x_2) \end{bmatrix} \hspace{0.3cm} \text{and} \hspace{0.3cm} S = \begin{bmatrix} h^2/r & 0 \\ 0 & 0 \end{bmatrix},
\end{equation*}
the EKF for the model~\eqref{eq:velocityModel} takes the form
\begin{align*}
\dif \Xkf_t ={}& \begin{bmatrix} a_1 \Xkf_{t,1} + a_2 \Xkf_{t,2} \\ -g(\Xkf_{t,2}) \end{bmatrix} \dif t + \begin{bmatrix} P_{t,11} & P_{t,12} \\ P_{t,12} & P_{t,22} \end{bmatrix} \begin{bmatrix} h/r \\ 0 \end{bmatrix} \begin{bmatrix} \dif Y_t - h\big(a_1 \Xkf_{t,1} + a_2 \Xkf_{t,2}\big) \dif t \end{bmatrix}, \\
\partial_t P_t ={}& \begin{bmatrix} a_1 & a_2 \\ 0 & -g'(\Xkf_{t,2}) \end{bmatrix} \begin{bmatrix} P_{t,11} & P_{t,12} \\ P_{t,12} & P_{t,22} \end{bmatrix} + \begin{bmatrix} P_{t,11} & P_{t,12} \\ P_{t,12} & P_{t,22} \end{bmatrix} \begin{bmatrix} a_1 & 0 \\ a_2 & -g'(\Xkf_{t,2}) \end{bmatrix} \\
&+ \begin{bmatrix} q_{\text{tu},1} & 0 \\ 0 & q_{\text{tu},2} \end{bmatrix} - \begin{bmatrix} P_{t,11} & P_{t,12} \\ P_{t,12} & P_{t,22} \end{bmatrix} \begin{bmatrix} h^2/r & 0 \\ 0 & 0 \end{bmatrix} \begin{bmatrix} P_{t,11} & P_{t,12} \\ P_{t,12} & P_{t,22} \end{bmatrix},
\end{align*}
where $q_{\text{tu},1}$ and $q_{\text{tu},2}$ are elements of the diagonal tuned noise covariance $\Qt$. Differential equations for the three distinct elements of $P_{t,11}$ are
\begin{align*}
\partial_t P_{t,11} &= 2a_1 P_{t,11} + q_{\text{tu},1} - s P_{t,11}^2 + 2a_2 P_{t,12},\\
\partial_t P_{t,12} &= \big[ a_1 - g'(\Xkf_{t,2}) - s P_{t,11} \big] P_{t,12} + a_2 P_{t,22},\\
\partial_t P_{t,22} &= -2g'(\Xkf_{t,2}) P_{t,22} + q_{\text{tu},2} - s P_{t,12}^2.
\end{align*}
From \eqref{eq:GassIntVel} it follows that $\partial_t P_{t,22} \leq -2\ell_g P_{t,22} + q_{\text{tu},2}$, which yields the upper bound
$P_{t,22} \leq \neper^{-2\ell_g t} P_{0,22} + q_{\text{tu},2}/(2\ell_g) \eqqcolon C_{22}(t)$.
Suppose that $P_{0,12} \geq 0$. Since $a_2 P_{t,22} > 0$, we have $P_{t,12} \geq 0$ for every $t \geq 0$. Thus $\partial_t P_{t,11} \geq 2a_1 P_{t,11} + q_{\text{tu},1} - s P_{t,11}^2$, and from this it can be established that~\cite[Lemma 3]{Karvonen2018}
\begin{equation*}\label{eq:P11Lower}
P_{t,11} \geq \frac{a_1 + ( s q_{\text{tu},1} + a_1^2 )^{1/2}  }{s} - \alpha_{1} \neper^{-\beta_{1} t}
\end{equation*}
for some positive constants $\alpha_{1}$ and $\beta_{1}$. It follows that
\begin{equation*}
a_1 - g'(x) - s P_{t,11} \leq a_1 - \ell_g - s P_{t,11} \leq a_1 - \ell_g - ( s q_{\text{tu},1} + a_1^2 )^{1/2} + s \alpha_{1} \neper^{-\beta_{1} t}.
\end{equation*}
That is, for every $0 < \lambda_{12} < \ell_g + ( s q_{\text{tu},1} + a_1^2 )^{1/2}$ there are $q_{\text{tu},1}$ and a time-horizon $T_{\lambda_{12}}$ such that
\begin{equation}\label{eq:lambda12cond}
a_1 - g'(x) - s P_{t,11} \leq a_1 - \ell_g - s P_{t,11} \leq -\lambda_{12} < 0
\end{equation}
when $t \geq T_{\lambda_{12}}$. Thus $\partial_t P_{t,12} \leq -\lambda_{12} P_{t,12} + a_2 P_{t,22} \leq -\lambda_{12} P_{t,12} + a_2 C_{22}(t)$ for $t \geq T_{\lambda_{12}}$, implying that there is a time-uniform upper bound $C_{12}$ on $P_{t,12}$. From this we obtain an upper bound for $P_{t,11}$:
\begin{equation*}
\partial_t P_{t,11} = 2a_1 P_{t,11} + q_{\text{tu},1} - s P_{t,11}^2 + 2a_2 P_{t,12} \leq 2a_1 P_{t,11} - s P_{t,11}^2 + q_{\text{tu},1} + 2a_2 C_{12}
\end{equation*}
implies that
\begin{equation*}
P_{t,11} \leq \frac{a_1 + ( s(q_{\text{tu},1} + 2a_2 C_{12}) + a_1^2)^{1/2} }{s} + \alpha_2 \neper^{-\beta_2 t}
\end{equation*}
for some positive constants $\alpha_{2}$ and $\beta_{2}$. Since the both diagonal elements $P_{t,11}$ and $P_{t,22}$ are bounded, we have thus obtained an upper bound on $\trace(P_t)$.

Finally, to show that \Cref{thm:ConInEq} is applicable, we need to prove that the matrix
\begin{equation*}
A = \big( J_f(x) - P_t S \big)_{\text{sym}} = \begin{bmatrix} a_1 - s P_{t,11} & a_2 \\ -s P_{t,12} & -g'(x)\end{bmatrix}_\text{sym} = \begin{bmatrix} 2(a_1 - s P_{t,11}) & a_2 - s P_{t,12} \\ a_2 - s P_{t,12} & -2 g'(x) \end{bmatrix}
\end{equation*}
is negative-definite for every $x \in \R$ and large enough $t$.
Eigenvalues of this matrix are
\begin{equation*}
\frac{1}{2} \Big( \trace(A) \pm \sqrt{\trace(A)^2 - 4 \det(A)} \Big).
\end{equation*}
Having previously selected $q_{\text{tu},1}$ and $T_{\lambda_{12}}$ such that
\begin{equation*}
\frac{1}{2} \trace(A) = a_1 - g'(x) - s P_{t,11} \leq -\lambda_{12} < 0,
\end{equation*}
we see that the larger of the eigenvalues is negative since \sloppy{${\sqrt{\trace(A)^2 - 4\det(A)} < \abs[0]{\trace(A)}}$}.
We have thus proved that error covariance inflation can be used to induce provable stability of the extended Kalman--Bucy filter for the integrated velocity model~\eqref{eq:velocityModel}.


\section{Numerical Examples}\label{sec:numerics}

This section contains numerical examples that validate the mean square error bound of \Cref{thm:ConInEq} for the extended and unscented Kalman--Bucy filters applied to two toy models.

\subsection{Contractive Dynamics}

In this example we consider the EKF and the UKF for the fully observed model
\begin{equation}\label{eq:contractiveModel}
\begin{split}
\dif X_t &= f(X_t) \dif t + \dif W_t, \\
\dif Y_t &= X_t \dif t + \sqrt{8} \dif V_t,
\end{split}
\end{equation}
that is initialised from $X_0 \sim \gauss(0,10^{-2})$ and is specified by the drift function
\begin{equation*}
f(x) = \begin{bmatrix} -x_3 \Big( 1 + \frac{1}{1+x_3^2} \Big) - 3 x_1 \\ -x_1 - x_2 - x_3 \\ x_1^2 \neper^{-x_1^2-x_3^2} - x_1 - 2 x_3 \end{bmatrix}.
\end{equation*}
We compute $N(f) \approx -4.5046$ and $ M(f) \approx -0.5947$.
Hence the model is exponentially stable and the assumptions of \Cref{prop:ContDrift} are satisfied with $\ell_c = -M(f)$. For a generic Kalman--Bucy filter, this proposition yields the bound (when $\Qt = Q$)
\begin{equation*}
\trace(P_t) \leq \lambda_P = \trace(P_0) + \trace(Q)/(2 \ell_c) \approx 2.552.
\end{equation*}
We use the initialization $\Xkf = \expec(X_0)= 0$. The mean square bound of \Cref{thm:ConInEq} is
\begin{equation}\label{eq:ex1Bound}
\expec\big( \norm[0]{X_t - \Xkf_t}^2 \big) \leq \trace(P_0) \neper^{-2\lambda t} + \frac{\trace(Q) + 2 C_\lambda \lambda_P + \trace(S) \lambda_P^2 }{2\ell_c},
\end{equation}
where $C_\lambda = 0$ for the EKF and 
$C_\lambda = M(f) - N(f) + \trace(S) \lambda_P \approx  4.867$
for the UKF (see \Cref{sec:stability}). Note that this is merely a shortcoming of the proof technique we have used rather than a manifestation of greater accuracy of the EKF.

\Cref{fig:contractive} depicts the theoretical upper bounds on $\expec(\norm[0]{X_t - \Xkf_t}^2)$ for the EKF and the UKF and the empirical mean square error based on 1,000 state and measurement trajectory realizations. The results were obtained using Euler--Maruyama discretization with step-size 0.01. It is evident that the theoretical bounds are valid and somewhat conservative, which is quite typical in stability theory of non-linear Kalman filters (see, e.g., numerical examples in~\cite{ReifGuntherYazUnbehauen1999,ReifGuntherYazUnbehauen2000}).

\begin{figure}[t!]
  \centering
  \includegraphics[width=\columnwidth]{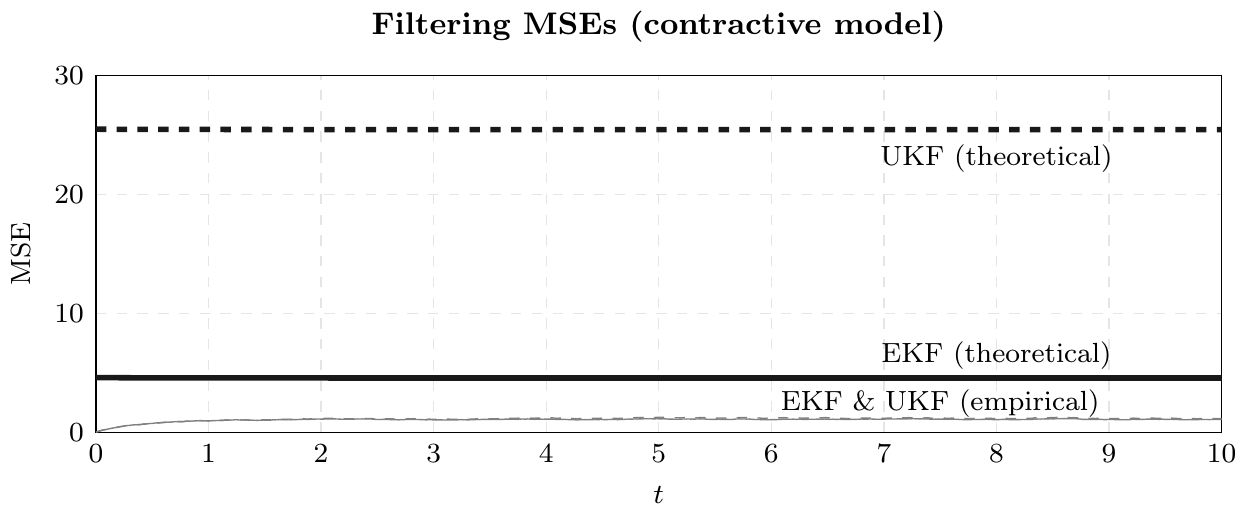}
  \caption{Empirical mean square filtering errors based on 1,000 state and measurement trajectory realisations and the theoretical error bounds~\eqref{eq:ex1Bound} for the EKF and the UKF applied to the model~\eqref{eq:contractiveModel}. Time-averaged empirical MSEs are $1.058$ (EKF) and $1.128$ (UKF).}\label{fig:contractive}
\end{figure}

\subsection{Integrated Velocity Model}

We now validate the theoretical bounds obtained in \Cref{sec:stability} on the integrated velocity model discussed in \Cref{sec:IntVel}.
Consider the EKF for the integrated velocity model~\eqref{eq:velocityModel} with the parameters $a_1 = 0$, $a_2 = 1$, $q_1 = q_2 = 0.05$, $h = 1$, $r = 0.05$, $\Xkf_0 = 0$, $\mu_0 = 0$, $P_0 = 0.01 \Id$, and
\begin{equation*}
g(x) = x \bigg( 1+\frac{\sin x}{1+x^2} \bigg), \quad g'(x) = 1 + \frac{(x^3+x)\cos x - (x^2-1)\sin x }{(1+x)^2}.
\end{equation*}
The maximum and minimum of $g'$ are $\sup_{x \in \R} g'(x) \approx 1.581$ and $\inf_{x \in \R} g'(x) \approx 0.419$.
That is, $g$ satisfies \eqref{eq:GassIntVel} with $\ell_g = 0.419$. Based on the derivations in \Cref{sec:IntVel} we are able to compute that $\trace(P_t) \leq \lambda_P \approx 0.173$ for all sufficiently large $t$. Because $a_1 = 0$, no covariance inflation is needed for \eqref{eq:lambda12cond} to hold.
In this particular case, the value $\lambda = 0.5478$ can be used in \Cref{thm:ConInEq}.

\Cref{fig:intvel} depicts the limiting (i.e., all exponentially decaying terms are disregarded) theoretical mean square filtering error bound for the EKF thus obtained and the empirical mean square error based on 1,000 state and measurement trajectory realisations. Again, Euler--Maruyama discretization with step-size 0.01 was used.

\begin{figure}[t!]
  \centering
  \includegraphics[width=\columnwidth]{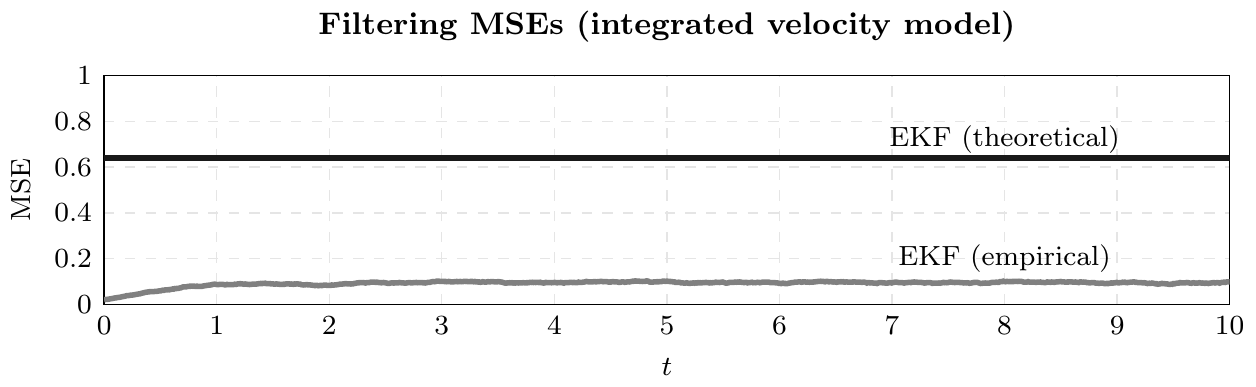}
  \caption{Empirical mean square filtering errors based on 1,000 state and measurement trajectory realisations and the limiting theoretical error bound for the EKF applied to an integrated velocity model.}\label{fig:intvel}
\end{figure}


\section{Conclusions and Discussion}

In this article we have shown that large classes of generic filters for both continuous and discrete-time systems with non-linear state dynamics and linear measurements are stable in the sense of time-uniformly bounded mean square filtering error if certain stringent conditions on boundedness of error covariance matrices and the filtering error process are met. The analysis extends the previous work~\cite{DelMoralKurtzmannTugaut2016} for the extended Kalman--Bucy filter and fully observed and exponentially stable state models. Our main contributions have been generalizations to models that need not be fully observed or exponentially stable and to a large class of commonly used extensions of the Kalman--Bucy or Kalman filter to non-linear systems, such as Gaussian assumed density filters and their numerical approximation, including the unscented Kalman filter. In \Cref{sec:examples}, we have also presented three different classes of models and filters that satisfy the stability assumptions. This is in stark contrast to earlier work for, for example, the UKF that has relied on unverifiable assumptions on certain auxiliary random matrices~\cite{XiongZhangChan2006}.

The results rely on admittedly very strong conditions on the filtering error process. These conditions cannot be significantly relaxed unless a more sophisticated proof technique is devised as the technique we have used essentially neglects potential non-linear couplings of state components. It appears to us that no such technique exists at the moment. The only non-trivial and interesting extensions that we believe are possible are to fully detected systems, essentially generalizations of the integrated velocity model we considered in \Cref{sec:IntVel}, where not all state components need to be (fully) observed, but those that are not must be exponentially stable so that their effect on observed components is small.


\appendix

\section{Gaussian Assumed Density and Integration Filters} \label{app:int-quad}

This appendix proves that the Gaussian assumed density and integration filters defined in Sections~\ref{sec:slf} and~\ref{sec:ukf} satisfy Assumption~\ref{ass:Lt}.

For the Gaussian assumed density filter the functional $\Lin_{x,P}^\textsm{ADF}$ is defined in~\eqref{eq:L-slf}. For any differentiable $g \colon \R^{d_x} \to \R^{d_x}$ we have
\begin{equation*}
\begin{split}
\inprod[\big]{x-{\tilde{x}}}{g(x) - \Lin_{{\tilde{x}},P}^\textsm{ADF}(g)} \hspace{-2.5cm}& \\
&= \inprod[\big]{x-{\tilde{x}}}{g(x) - \expec_{\gauss({\tilde{x}},P)}(g)} \\
&= \int_{\R^{d_x}} \inprod[\big]{(x-z) + (z-{\tilde{x}})}{g(x) - g(z)} \, \gauss(z \mid {\tilde{x}}, P) \dif z \\
&= \int_{\R^{d_x}} \inprod[\big]{x-z}{g(x) - g(z)} \, \gauss(z \mid {\tilde{x}}, P) \dif z - \int_{\R^{d_x}} \inprod[\big]{z - {\tilde{x}}}{g(z)} \, \gauss(z \mid {\tilde{x}}, P) \dif z.
\end{split}
\end{equation*}
The first term can be bounded as
\begin{equation*}
\begin{split}
\int_{\R^{d_x}} \inprod[\big]{x-z}{g(x) - g(z)} \, \gauss(z \mid {\tilde{x}}, P) \dif z \hspace{-2.5cm}&\\
&\leq M(g) \int_{\R^{d_x}} \norm[0]{x-z}^2 \, \gauss(z \mid {\tilde{x}}, P) \dif z \\
&= M(g) \bigg( \int_{\R^{d_x}} \big( \norm[0]{z-{\tilde{x}}}^2 + \norm[0]{x-{\tilde{x}}}^2 \big) \, \gauss(z \mid {\tilde{x}}, P) \dif z \bigg) \\
&= M(g) \big[ \norm[0]{x-{\tilde{x}}}^2 + \trace(P) \big],
\end{split}
\end{equation*}
whereas the second has the bound
\begin{equation*}
\begin{split}
- \int_{\R^{d_x}} \inprod[\big]{z - {\tilde{x}}}{g(z)} \, \gauss(z \mid {\tilde{x}}, P) \dif z &= - \int_{\R^{d_x}} \inprod[\big]{z - {\tilde{x}}}{g(z) - g({\tilde{x}})} \, \gauss(z \mid {\tilde{x}}, P) \dif z \\
&\leq - N(g) \int_{\R^{d_x}} \norm[0]{z-{\tilde{x}}}^2 \, \gauss(z \mid {\tilde{x}}, P) \dif z \\
&= -N(g) \trace(P).
\end{split}
\end{equation*}
These estimates show that \Cref{ass:Lt} holds with $C_g = M(g) - N(g) \geq 0$.

For the Gaussian integration filter the functional $\Lin_{x,P}^\textsm{int}$ is defined in~\eqref{eq:cubature}. That~\eqref{eq:quadraticExact} holds for any polynomial of total degree up to two implies that $\sum_{i=1}^n w_i = 1$ and $\sum_{i=1}^n w_i \sqrt{P}\xi_i = 0$ since $\Lin_{x,P}^\textsm{int}(1) = \Lin_{\gauss(x,P)}(1) = 1$ and $\Lin_{x,P}^\textsm{int}(p) = \expec_{\gauss(x,P)}(p) = 0$ for $p(z) = z - x$. Under these assumptions we can proceed as above:
\begin{align*}
\inprod[\big]{x - \tilde{x}}{g(x) - \Lin_{\tilde{x},P}^\textsm{int}(g)} ={}& \inprod[\bigg]{x - \tilde{x}}{g(x) - \sum_{i=1}^n w_i g\big(\tilde{x} + \sqrt{P} \xi_i\big) } \\
={}& \sum_{i=1}^n w_i \inprod[\Big]{x - \big(\tilde{x} + \sqrt{P} \xi_i\big) + \sqrt{P} \xi_i }{ g(x) -  g\big(\tilde{x} + \sqrt{P} \xi_i\big) } \\
={}& \sum_{i=1}^n w_i \bigg( \inprod[\Big]{x - \big(\tilde{x} + \sqrt{P} \xi_i\big)}{ g(x) -  g\big(\tilde{x} + \sqrt{P} \xi_i\big) } \\
&\quad\quad\quad\quad+ \inprod[\Big]{\sqrt{P} \xi_i}{ g(x) -  g\big(\tilde{x} + \sqrt{P} \xi_i\big) } \bigg)
\end{align*}
Hence
\begin{equation*}
\begin{split}
\inprod[\big]{x - \tilde{x}}{g(x) - \Lin_{\tilde{x}y,P}(g)} \leq{}&  M(g) \sum_{i=1}^n w_i \norm[1]{x-\big(\tilde{x}+\sqrt{P}\xi_i\big) }^2 \\
&+ \sum_{i=1}^n w_i \inprod[\Big]{\sqrt{P} \xi_i}{ g(x) -  g\big(\tilde{x} + \sqrt{P} \xi_i\big) }.
\end{split}
\end{equation*}
The first term is a sigma-point approximation of a quadratic function. Using \eqref{eq:quadraticExact} and proceeding again as in \Cref{sec:slf},
\begin{equation*}
\sum_{i=1}^n w_i \norm[1]{x-\big({\tilde{x}}+\sqrt{P}\xi_i\big) }^2 = \int_{\R^{d_x}} \norm[0]{x-z}^2 \, \gauss(z \mid {\tilde{x}}, P) \dif z = \norm[0]{x-{\tilde{x}}}^2 + \trace(P).
\end{equation*}
To bound the second term, notice that
\begin{equation*}
\begin{split}
  \sum_{i=1}^n w_i &\inprod[\Big]{\sqrt{P} \xi_i}{ g(x) -  g\big({\tilde{x}} + \sqrt{P} \xi_i\big) } \\
  &= -\sum_{i=1}^n w_i \inprod[\Big]{{\tilde{x}} - \big({\tilde{x}} + \sqrt{P} \xi_i\big) }{ g({\tilde{x}}) -  g\big({\tilde{x}} + \sqrt{P} \xi_i\big) } \\
&\leq -N(g) \sum_{i=1}^n \norm[1]{\sqrt{P} \xi_i }^2 \\
&= -N(g) \trace(P)
\end{split}
\end{equation*}
by exactness of $\Lin_{\tilde{x},P}^\textsm{int}$ for quadratic polynomials. \Cref{ass:Lt} thus holds with the constant $C_g = M(g) - N(g)$.

\section{Grönwall's Inequalities} \label{sec:gronwall}

Classical Grönwall inequalities are a basic ingredients in our proofs.
\paragraph{\textbf{Continuous version}} Suppose that $\beta_t$ is a continuous real-valued function of $t \in \R$ and  $x_t$ is continuously differentiable on $\R_+$ and satisfies $\partial_t x_t \leq \alpha x_t + \beta_t$, $t \geq 0$, for some constant $\alpha$. Then Grönwall's inequality states that
\begin{equation}\label{eq:Gronwall1}
x_t \leq x_0 \neper^{\alpha t} + \int_0^t \neper^{\alpha(t-s)} \beta_s \dif s
\end{equation}
for every $t \geq 0$. If $\beta_t \equiv \beta$, \eqref{eq:Gronwall1} reduces to
\begin{equation}\label{eq:Gronwall2}
x_t \leq x_0 \neper^{\alpha t} - (1-\neper^{\alpha t})\beta / \alpha. 
\end{equation}
The form of \eqref{eq:Gronwall2} that we need the most is the one where $\beta_t \equiv \beta \geq 0$ and $\alpha = -\gamma$ for $\gamma > 0$. Then the inequality takes the form $x_t \leq x_0 \neper^{-\gamma t} + \beta/\gamma$.

\paragraph{\textbf{Discrete version}} Let \sloppy{${0 \leq \alpha < 1}$} and $\beta \geq 0$ and suppose that $x_k \geq 0$ satisfy the difference inequality $x_k \leq \alpha x_{k-1} + \beta$ for $k \geq 1$. Then the discrete Grönwall's inequality states that
\begin{equation} \label{eq:Gronwall-discrete}
  x_k \leq \alpha^k x_0 + \beta \sum_{n=0}^{k-1} \alpha^n \leq \alpha^k x_0 + \frac{\beta}{1 - \alpha}.
\end{equation}

\section{Bernstein's Concentration Inequality}\label{appendix:bernstein}

In contrast to Del Moral et al.\@~\cite{DelMoralKurtzmannTugaut2016} who base their exponential concentration inequality for the EKF on the concentration inequality appearing in Proposition 11.6.6 of~\cite{DelMoral2013}, we use a version of the classical Bernstein inequality.

\begin{theorem}[Bernstein's inequality]\label{thm:bernstein} Let $X$ be a non-negative random variable. Suppose that there is $\alpha > 0$ such that
$\expec(X^n) \leq n^n \alpha^n$
for every integer $n \geq 2$. Then
\begin{equation}\label{eq:ConcBernstein}
\mathbb{P} \big[ X \geq \alpha \neper \big( \sqrt{2 \delta} + \delta \big) \big] \leq \neper^{-\delta}
\end{equation}
for any $\delta > 0$.
\end{theorem}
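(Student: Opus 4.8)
The plan is to bound the moment generating function of $X$ by a geometric-type series using the hypothesis $\expec(X^n) \leq n^n \alpha^n$, then apply a Chernoff-style argument. Specifically, I would start from Markov's inequality applied to $X$ raised to a well-chosen power, or more directly estimate $\mathbb{P}[X \geq s]$ via the bound $\mathbb{P}[X \geq s] \leq \inf_{n \geq 2} s^{-n} \expec(X^n) \leq \inf_{n \geq 2} (n\alpha/s)^n$. The right-hand side is minimized (over real $n$) near $n = s/(\neper \alpha)$, giving a bound of the form $\exp(-s/(\neper\alpha))$ for $s$ large enough; but since $n$ must be an integer $\geq 2$ this requires a little care, and the clean closed form $\neper(\sqrt{2\delta}+\delta)$ in the statement suggests the authors instead go through the moment generating function.

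So the cleaner route: for $0 < \theta < 1/(\neper\alpha)$ write $\expec(\neper^{\theta X}) = \sum_{n \geq 0} \theta^n \expec(X^n)/n!$. Split off the $n = 0$ and $n = 1$ terms (handling $\expec(X)$ and $\expec(X^2)^{1/2}$ via Jensen/Cauchy--Schwarz from the $n=2$ bound, since $\expec(X) \leq \expec(X^2)^{1/2} \leq 2\alpha$), and for $n \geq 2$ use $\expec(X^n) \leq n^n\alpha^n$ together with $n^n/n! \leq \neper^n$ (a consequence of $\neper^n = \sum_k n^k/k! \geq n^n/n!$) to get $\theta^n \expec(X^n)/n! \leq (\neper\theta\alpha)^n$. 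Summing the geometric tail yields $\expec(\neper^{\theta X}) \leq 1 + 2\alpha\theta + (\neper\theta\alpha)^2/(1 - \neper\theta\alpha)$ or a similar explicit expression bounded by something like $\exp(c\,\theta\alpha + (\neper\theta\alpha)^2 \cdot (\text{const}))$ for $\theta$ bounded away from the singularity. Then Markov gives $\mathbb{P}[X \geq s] \leq \exp(-\theta s)\expec(\neper^{\theta X})$, and optimizing over $\theta$ in the admissible range produces a Bernstein-type exponent $-\min(s^2/(c_1\alpha^2), s/(c_2\alpha))$; substituting $s = \alpha\neper(\sqrt{2\delta}+\delta)$ should make both branches dominate $\delta$, yielding the claimed $\neper^{-\delta}$.

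The key steps in order: (1) derive the crude moment bounds for $n = 1, 2$ from the $n \geq 2$ hypothesis; (2) establish $n^n/n! \leq \neper^n$; (3) sum the exponential series to get an explicit MGF bound valid for $\theta\alpha\neper < 1$; (4) apply the Chernoff bound $\mathbb{P}[X \geq s] \leq \neper^{-\theta s}\expec(\neper^{\theta X})$; (5) choose $\theta$ as a function of $s$ (two regimes: small $s$ linear in $\delta$, large $s$ quadratic) and verify that the specific $s$ in \eqref{eq:ConcBernstein} gives exponent at most $-\delta$. The main obstacle I anticipate is bookkeeping in step (5): getting the constants to line up so that the single clean expression $\neper(\sqrt{2\delta}+\delta)$ works uniformly in $\delta > 0$ across both the sub-Gaussian ($\delta$ small) and sub-exponential ($\delta$ large) regimes, rather than with two separate cases or unspecified constants. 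This is the standard way the Bernstein inequality is packaged, and the $\sqrt{2\delta}$ term is exactly the fingerprint of completing the square in the sub-Gaussian regime while the $\delta$ term comes from the linear regime; reconciling the two into one bound is routine but fiddly.
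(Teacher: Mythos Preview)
Your approach is correct and shares the essential ingredient with the paper: the bound $n^n/n! \leq \neper^n$ (equivalently, the Stirling estimate $n^n \leq \neper^n n!/\sqrt{2\pi}$) that converts the hypothesis into the factorial-type moment condition $\expec(X^n) \leq \tfrac{n!}{2}(\neper\alpha)^n$. The difference is that the paper stops there and simply invokes the textbook Bernstein inequality (e.g., \cite[Theorem~2.10]{BoucheronLugosiMassart2013}), which states that $\expec(X^n) \leq \tfrac{n!}{2}\sigma\gamma^{n-2}$ for $n \geq 2$ implies $\mathbb{P}[X \geq \sqrt{2\sigma\delta} + \gamma\delta] \leq \neper^{-\delta}$; taking $\gamma = \neper\alpha$ and $\sigma = \gamma^2$ gives the claim immediately. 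Your plan instead re-derives this standard result from scratch via the MGF expansion and Chernoff optimization. That is perfectly valid and more self-contained, but it is precisely the ``fiddly bookkeeping in step~(5)'' you anticipate---and it is unnecessary, since the form $\sqrt{2\delta}+\delta$ is exactly the output of the cited theorem. Recognizing that the moment condition, after Stirling, matches the standard Bernstein hypothesis would save you all of steps~(3)--(5).
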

\begin{proof} By the standard Stirling bound,
\begin{equation*}
\expec(X^n) \leq n^n \alpha^n \leq \frac{n!}{\sqrt{2\pi}} \neper^n \alpha^n \leq \frac{n!}{2} ( \neper \alpha)^n
\end{equation*}
for every $n \geq 2$. The ``standard'' version of Bernstein's inequality~(e.g.,~\cite[Theorem 2.10]{BoucheronLugosiMassart2013}) posits that $\expec(X^2) \leq \sigma < \infty$ and $\expec(X^n) \leq \frac{n!}{2} \sigma \gamma^{n-2}$ for some $\sigma > 0$ and $\gamma > 0$ and every $n \geq 3$ imply $\mathbb{P} [ X \geq \sqrt{2 \sigma \delta} + \gamma \delta ] \leq \neper^{-\delta}$ for any $\delta > 0$. Thus, setting $\gamma = \neper \alpha$ and $\sigma = \gamma^2$ produces the claim.
\end{proof}

The concentration inequality used in \cite[Equation~(3.6)]{DelMoralKurtzmannTugaut2016} is based on the same moment bounds $\expec(X^n) \leq n^n \alpha^n$ but states instead that
\begin{equation*}
\mathbb{P} \, \bigg[ X \geq \frac{\alpha \neper^2}{\sqrt{2}} \bigg( \frac{1}{2} + \sqrt{\delta} + \delta \bigg) \bigg] \leq \neper^{-\delta}
\end{equation*}
for any $\delta > 0$. Since
\begin{equation*}
\mathbb{P} \big[ X \geq \alpha \neper \big( \sqrt{2 \delta} + \delta \big) \big] = \mathbb{P} \bigg[ X \geq \alpha \sqrt{2} \neper \bigg( \sqrt{\delta} + \frac{1}{\sqrt{2}}\delta \bigg) \bigg]
\end{equation*}
and $\sqrt{2}\neper < \neper^2/\sqrt{2}$, the inequality~\eqref{eq:ConcBernstein} is the tighter of the two for every $\delta > 0$.

\begin{lemma}\label{lemma:ChiSquareLemma} Let $X \sim \gauss(m, P)$ be a $d$-dimensional Gaussian random vector with a positive-semidefinite covariance $P$. Then
\begin{equation*}
\expec(\norm[0]{X}^{2n})^{1/n} \leq 4\big( \norm[0]{m}^{2} + \norm[0]{P} [d+2] n \big)
\end{equation*}
for every $n \in \mathbb{N}$. If $m = 0$, the inequality is 
\begin{equation*}
  \expec(\norm[0]{X}^{2n})^{1/n} \leq \norm[0]{P} (d+2) n.
\end{equation*}
\end{lemma}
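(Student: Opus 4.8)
The plan is to reduce the moment bound for $\norm[0]{X}^{2n}$ to the well-known moments of a chi-squared distribution. First I would write $X = m + P^{1/2} Z$ with $Z \sim \gauss(0, \Id_d)$, so that by the triangle inequality and the crude bound $(a+b)^2 \leq 2a^2 + 2b^2$ we get $\norm[0]{X}^2 \leq 2\norm[0]{m}^2 + 2\norm[0]{P^{1/2}Z}^2 \leq 2\norm[0]{m}^2 + 2\norm[0]{P}\norm[0]{Z}^2$. Raising to the $n$-th power and using $(a+b)^n \leq 2^{n-1}(a^n + b^n)$ yields $\expec(\norm[0]{X}^{2n}) \leq 2^{2n-1}\big(\norm[0]{m}^{2n} + \norm[0]{P}^n \expec(\norm[0]{Z}^{2n})\big)$. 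Taking $n$-th roots and using subadditivity of $t \mapsto t^{1/n}$ gives $\expec(\norm[0]{X}^{2n})^{1/n} \leq 4\big(\norm[0]{m}^2 + \norm[0]{P}\,\expec(\norm[0]{Z}^{2n})^{1/n}\big)$, so everything comes down to bounding the raw moments of $\norm[0]{Z}^2 \sim \chi^2_d$.

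For the chi-squared moments I would use the exact formula $\expec(\norm[0]{Z}^{2n}) = 2^n \Gamma(d/2 + n)/\Gamma(d/2) = \prod_{j=0}^{n-1}(d + 2j)$. Since each factor satisfies $d + 2j \leq d + 2(n-1) \leq (d+2)n$ for $j \leq n-1$ and $n \geq 1$, we get $\expec(\norm[0]{Z}^{2n}) \leq (d+2)^n n^n$, hence $\expec(\norm[0]{Z}^{2n})^{1/n} \leq (d+2)n$. Substituting into the bound from the first step yields $\expec(\norm[0]{X}^{2n})^{1/n} \leq 4\big(\norm[0]{m}^2 + \norm[0]{P}(d+2)n\big)$, which is the claimed inequality. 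The zero-mean case $m = 0$ is immediate: then $X = P^{1/2}Z$ directly, $\norm[0]{X}^{2n} \leq \norm[0]{P}^n \norm[0]{Z}^{2n}$ with no factor of $2$ lost, so $\expec(\norm[0]{X}^{2n})^{1/n} \leq \norm[0]{P}\,\expec(\norm[0]{Z}^{2n})^{1/n} \leq \norm[0]{P}(d+2)n$.

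I do not anticipate a genuine obstacle here — the argument is elementary. The only points requiring mild care are: keeping track of the numerical constant $4$ (which is why I split off $\norm[0]{m}^2$ with the factor-of-two inequalities rather than trying to be sharp), and justifying the elementary estimate $\prod_{j=0}^{n-1}(d+2j) \leq ((d+2)n)^n$, which holds because $d + 2j \leq d + 2(n-1) \leq dn + 2n = (d+2)n$ when $n \geq 1$. One could alternatively invoke a standard Gaussian hypercontractivity / Gaussian-norm concentration bound to get the moments of $\norm[0]{Z}$, but the direct product formula for $\chi^2_d$ moments is cleanest and gives the stated constants with no slack beyond what is needed.
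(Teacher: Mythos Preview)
Your proposal is correct and follows essentially the same route as the paper's own proof: write $X = m + P^{1/2}Z$, use the factor-of-two inequalities to split off $\norm[0]{m}$ and obtain the $2^{2n-1}$ constant, bound $\expec(\norm[0]{Z}^{2n})$ via the chi-squared product formula $\prod_{j=0}^{n-1}(d+2j) \leq (d+2)^n n^n$, and finish with subadditivity of $t \mapsto t^{1/n}$. The only cosmetic difference is that you first bound $\norm[0]{X}^2$ and then raise to the $n$th power, whereas the paper bounds $\norm[0]{X}^{2n}$ directly by $(\norm[0]{m}+\norm[0]{P^{1/2}Z})^{2n}$; both paths give the same $2^{2n-1}$ factor and hence the constant~$4$.
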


\begin{proof} We know that $X = m + P^{1/2} U$ for a standard normal $U \in \R^d$. Therefore
\begin{equation*}
\begin{split}
\expec(\norm[0]{X}^{2n}) = \expec\big( \big[\norm[0]{m} + \norm[0]{P^{1/2} U} \big]^{2n} \big) &\leq 2^{2n-1} \big( \norm[0]{m}^{2n} + \expec\big[(U^\T P U)^n\big]\big) \\
&\leq 2^{2n-1} \big( \norm[0]{m}^{2n} + \norm[0]{P}^n \expec[\norm[0]{U}^{2n}] \big),
\end{split}
\end{equation*}
where $\expec(\norm[0]{U}^{2n})$ is the $n$th moment around zero of the chi-squared distribution with degrees of freedom $d$. That is,
\begin{equation*}
\expec(\norm[0]{U}^{2n}) = d \times \cdots \times (d+2(n-1)) \leq (d+2(n-1))^n \leq (d+2)^n n^n.
\end{equation*}
The inequality $(a+b)^{1/n} \leq a^{1/n} + b^{1/n}$ for any $a, b > 0$ yields the first claim. The second claim follows from $\expec(\norm[0]{X}^{2n}) = \expec([U^\T P U]^n)$ if $X$ is zero-mean.
\end{proof}

\section{Proof of \Cref{thm:ConInEq}}\label{appendix:ConcProofs}

This appendix contains the complete proof of \Cref{thm:ConInEq}.
We begin with a proposition providing bounds for functions satisfying certain differential inequalities. This proposition is a modification of Grönwall's inequalities~\eqref{eq:Gronwall1} and~\eqref{eq:Gronwall2} and appears also in~\cite[Appendix~A.2]{DelMoralKurtzmannTugaut2016}.

\begin{proposition}\label{prop:ODEProp} Let $\alpha \neq 0$ and $\beta \geq 0$ be constants and $n$ a positive integer. Suppose that a non-negative and differentiable function $x_t$ satisfies the differential inequality $\partial_t x_t \leq \alpha n x_t + \beta n^2 x_t^{1 - 1/n}$ for $t \geq t_0$. Then
\begin{equation*}
x_t^{1/n} \leq x_{t_0}^{1/n} \neper^{\alpha (t-t_0)} + \frac{\beta n}{\alpha} \big( \neper^{\alpha (t-t_0)} - 1 \big).
\end{equation*}
\end{proposition}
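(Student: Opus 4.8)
The plan is to reduce the differential inequality for $x_t$ to a linear one for the substituted variable $y_t = x_t^{1/n}$ and then invoke the continuous Grönwall inequality~\eqref{eq:Gronwall2}. First I would observe that since $x_t \geq 0$, the function $y_t = x_t^{1/n}$ is well-defined; where $x_t > 0$ it is differentiable with $\partial_t y_t = \tfrac{1}{n} x_t^{1/n - 1} \partial_t x_t$. Substituting the hypothesis $\partial_t x_t \leq \alpha n x_t + \beta n^2 x_t^{1-1/n}$ and multiplying through by $\tfrac{1}{n} x_t^{1/n-1} \geq 0$ gives
\begin{equation*}
\partial_t y_t \leq \frac{1}{n} x_t^{1/n-1} \big( \alpha n x_t + \beta n^2 x_t^{1-1/n} \big) = \alpha x_t^{1/n} + \beta n = \alpha y_t + \beta n .
\end{equation*}
This is exactly the setting of~\eqref{eq:Gronwall2} with the roles $x_0 \rightsquigarrow y_{t_0}$, $\alpha \rightsquigarrow \alpha$, $\beta \rightsquigarrow \beta n$, and time origin shifted to $t_0$, which yields
\begin{equation*}
y_t \leq y_{t_0} \neper^{\alpha(t-t_0)} - (1 - \neper^{\alpha(t-t_0)}) \frac{\beta n}{\alpha} = x_{t_0}^{1/n} \neper^{\alpha(t-t_0)} + \frac{\beta n}{\alpha}\big( \neper^{\alpha(t-t_0)} - 1 \big),
\end{equation*}
which is the claimed bound.

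The technical wrinkle — and the main thing to be careful about — is that the substitution $y_t = x_t^{1/n}$ need not be differentiable at times where $x_t = 0$ (for $n \geq 2$ the map $u \mapsto u^{1/n}$ has infinite derivative at $0$), so the manipulation above is only valid on open intervals where $x_t > 0$. I would handle this in the standard way: either (a) argue by continuity and a perturbation, replacing $x_t$ by $x_t + \epsilon$, deriving the inequality $\partial_t(x_t+\epsilon) \leq \alpha n (x_t + \epsilon) + \beta n^2 (x_t+\epsilon)^{1-1/n} + (\text{lower order in }\epsilon)$ — which needs a small amount of care since the inequality is not exactly preserved — and then letting $\epsilon \downarrow 0$; or, more cleanly, (b) work directly with the function $z_t = y_t \neper^{-\alpha(t-t_0)}$ and show $z_t - z_{t_0} \leq \int_{t_0}^t \neper^{-\alpha(s-t_0)}\beta n \, \dif s$ by noting that $z$ is continuous, is locally Lipschitz (hence absolutely continuous) on $\{x_t > 0\}$, and that on any maximal interval where $x_t = 0$ the right-hand side of the target inequality only increases while $z_t = 0$ stays at its minimum, so the bound is trivially maintained across such intervals. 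Approach (b) avoids the fractional-power differentiability issue entirely and is the one I would write up; it reduces the whole argument to the elementary observation that $t \mapsto x_t^{1/n}$ is absolutely continuous with the stated a.e. derivative bound.

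I expect the algebraic core — the substitution and the appeal to Grönwall — to be entirely routine; the only real obstacle is the regularity bookkeeping at zeros of $x_t$, and even that is a well-worn argument. Since the paper only applies this proposition to quantities like $\expec(\norm[0]{E_t}^2)$ and its powers, which in the relevant regime are strictly positive, one could alternatively just add the hypothesis $x_t > 0$ and dispense with the subtlety, but stating it for non-negative $x_t$ as above is cleaner and costs nothing given approach (b).
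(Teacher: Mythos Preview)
Your proposal is correct and is essentially the paper's own argument: the paper defines $z_t = \neper^{-\alpha n(t-t_0)} x_t$, passes to $z_t^{1/n}$, and integrates directly, which is exactly your approach~(b) with the order of the two substitutions (integrating factor and $n$th root) reversed. Your discussion of the differentiability issue at zeros of $x_t$ is more careful than the paper, which simply writes $\partial_t z_t^{1/n} = n^{-1} z_t^{1/n-1}\partial_t z_t$ without comment.
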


\begin{proof} For $t \geq t_0$, the function $z_t = \neper^{-\alpha n(t-t_0)} x_t \geq 0$ satisfies
\begin{align*}
\partial_t z_t = -\alpha n \neper^{-\alpha n(t-t_0)} x_t + \neper^{-\alpha n(t-t_0)} \partial_t x_t &\leq  n^2 \beta \neper^{-\alpha n(t-t_0)} x_t^{1 - 1/n} \\
&= \beta n^2 \neper^{-\alpha (t-t_0)} z_t^{1 - 1/n}.
\end{align*}
Consequently, for $t \geq t_0$, $\partial_t z_t^{1/n} = n^{-1} z_t^{1/n-1} \partial_t z_t \leq \beta n \neper^{-\alpha (t-t_0)}$ and direct integration yields
\begin{equation*}
z_t^{1/n} \leq z_{t_0}^{1/n} + \beta n \int_{t_0}^t \neper^{-\alpha (s-t_0)} \dif s = z_{t_0}^{1/n} + \frac{\beta n}{\alpha}\big( 1 - \neper^{-\alpha(t-t_0)} \big).
\end{equation*}
The claim is obtained by observing that $x^{1/n}_t = \neper^{\alpha  (t-t_0)} z^{1/n}_t$.
\end{proof}

\begin{proof}[Proof of \Cref{thm:ConInEq}] 

Application of Itô's lemma to~\eqref{eq:filtering-error-evolution} yields
\begin{equation*}
\begin{split}
\dif \norm[0]{E_t}^2 ={}& \dif M_t + \big[ \trace(Q) + \trace(S P_t^2) \big] \dif t \\
&+ 2 \big\langle f(X_t) - \Lin_{\Xkf_t,P_t}(f) - P_t S (X_t - \Xkf_t), X_t - \Xkf_t \big\rangle \dif t,
\end{split}
\end{equation*}
where 
\begin{equation*}
\dif M_t = 2 \inprod[\big]{Q^{1/2} \dif W_t - P_t H^\T R^{-1/2} \dif V_t}{X_t - \Xkf_t}
\end{equation*}
is a zero-mean (local) martingale. Keeping in mind that \sloppy{${\Lin_{\Xkf_t,P_t}(A) = A\Xkf_t}$} for any $A \in \R^{d_x \times d_x}$, we write
\begin{equation*}
f(X_t) - \Lin_{\Xkf_t,P_t}(f) - P_t S (X_t - \Xkf_t) = f(X_t) - P_t S X_t - \Lin_{\Xkf_t,P_t}\big(f - P_t S\big)
\end{equation*}
and apply \Cref{ass:Lt} to the function $g = f - P_t S$ with $x = X_t$ and $\tilde{x} = \Xkf_t$:
\begin{equation*}
\inprod[\big]{f(X_t) - \Lin_{\Xkf_t,P_t}(f) - P_t S (X_t - \Xkf_t)}{X_t - \Xkf_t} \leq -\lambda \norm[0]{X_t - \Xkf_t}^2 + C_\lambda \trace(P_t),
\end{equation*}
where $C_\lambda \geq 0$ is finite because $M(f-P_tS) \leq -\lambda$ and
\begin{equation*}
\begin{split}
N(f-P_tS) \geq N(f) + \nu(-P_tS) = N(f) - \mu(P_t S) &\geq N(f) - \norm[0]{P_t} \norm[0]{S} \\
&\geq N(f) - \trace(S) \lambda_P,
\end{split}
\end{equation*}
which is finite by \eqref{eq:JacobianBound} and the assumption $\sup_{t \geq 0} \trace(P_t) \leq \lambda_P$. For $t \geq T$, the assumption~\eqref{eq:contractivityCondition}, together with~\eqref{eq:lipBounds}, produces the almost sure bound
\begin{equation*}
\dif \norm[0]{E_t}^2 \leq - 2 \lambda \norm[0]{E_t}^2 \dif t + u \dif t + \dif M_t,
\end{equation*}
where $u = \trace(Q) + 2C_\lambda \lambda_P + \trace(S) \lambda_P^2$. Taking expectations and using Grönwall's inequality then yield the claimed mean square bound
\begin{equation}\label{eq:meanSquareBound}
\expec(\norm[0]{E_t}^2) \leq \expec(\norm[0]{E_T}^2) \neper^{-2\lambda(t-T)} + u/(2\lambda),
\end{equation}
with $\expec(\norm[0]{E_T}^2)$ being finite due to finiteness of $N(f)$ and $M(f)$. This concludes the proof of~\eqref{eq:MainMeanSquare}.

To prove the exponential concentration inequality~\eqref{eq:MainConcIneq} we compute upper bounds on $\expec(\norm[0]{E_t}^{2n})$ for every $n \geq 1$ in order to use \Cref{thm:bernstein}.
First, note that for $0 \leq t \leq T$ we have the inequality $\dif \norm[0]{E_t}^2 \leq 2\rho\norm[0]{E_t}^2 \dif t + u \dif t + \dif M_t$, where
\begin{equation*}
\rho = M(f) + \norm[0]{S} \trace(P_t) \geq M(f) + \mu(-P_t S) \geq M(f - P_t S).
\end{equation*}
In the following we can assume that $\rho > 0$, for if it were negative we could set $-\lambda = \rho$ and $T=0$. Let $\gamma$ stand for either $-\lambda$ or $\rho$. Observe that $\langle M \rangle_t$, the quadratic variation of $M_t$ (the increasing process such that $M_t^2 - \langle M \rangle_t$ is a martingale), satisfies
\begin{equation*}
\dif \, \langle M \rangle_t \leq 4 \norm[0]{E_t}^2 \big[ \trace(Q) + \trace( S P_t^2 ) \big] \dif t \leq 4 \norm[0]{E_t}^2 u \dif t.
\end{equation*}
For $n \geq 2$, the above inequality, the identity \sloppy{${\dif \, \langle \norm[0]{E}^2 \rangle_t = \dif \, \langle M \rangle_t}$}, and the general form of Itô's lemma then produce
\begin{equation*}
\begin{split}
\dif \norm[0]{E_t}^{2n} ={}& n \norm[0]{E_t}^{2(n-1)} \dif \norm[0]{E_t}^2 + \frac{n(n-1)}{2} \norm[0]{E_t}^{2(n-2)} \dif \, \langle \norm[0]{E}^2 \rangle_t \\
={}& n \norm[0]{E_t}^{2(n-1)} \dif \norm[0]{E_t}^2 + \frac{n(n-1)}{2} \norm[0]{E_t}^{2(n-2)} \dif \, \langle M \rangle_t \\
\leq{}& 2\gamma n \norm[0]{E_t}^{2n} \dif t + 2 u n^2 \norm[0]{E_t}^{2(n-1)} \dif t + n \norm[0]{E_t}^{2(n-1)} \dif M_t.
\end{split}
\end{equation*}
Induction establishes that $\expec(\norm[0]{E_t}^{2n})$ does not explode in finite time. Thus the term $\norm[0]{E_t}^{2(n-1)} \dif M_t$ vanishes when expectations are taken (see, e.g.,~\cite[Section 4.5]{KloedenPlaten1992} for similar arguments). Using Hölder's inequality with $p = n/(n-1)$, we get
\begin{equation*}
\begin{split}
\partial_t \expec(\norm[0]{E_t}^{2n}) &\leq 2\gamma n \, \expec(\norm[0]{E_t}^{2n}) + 2 u n^2 \, \expec(\norm[0]{E_t}^{2(n-1)}) \\
&\leq 2\gamma n \expec(\norm[0]{E_t}^{2n}) + 2 u n^2 \expec(\norm[0]{E_t}^{2n})^{1-1/n}.
\end{split}
\end{equation*}
We can now apply \Cref{prop:ODEProp} with $x_t = \norm[0]{E_t}^{2n}$ and $\beta = u$. Setting $\alpha = 2\rho$ and $t_0 = 0$ and considering $t \leq T$, we obtain
\begin{equation*}
\expec(\norm[0]{E_T}^{2n})^{1/n} \leq \bigg[ \expec(\norm[0]{E_0}^{2n})^{1/n} + \frac{un}{2\rho} \bigg] \neper^{2\rho T}.
\end{equation*}
Recall that $X_0 \sim \gauss(\mu_0, \Sigma_0)$ and $\Xkf_0 = \hat{x}_0$ is deterministic. Thus $E_0 \sim \gauss(\mu_0 - \hat{x}_0, \Sigma_0)$ so that \Cref{lemma:ChiSquareLemma} gives
\begin{equation*}
\begin{split}
\expec(\norm[0]{E_T}^{2n})^{1/n} &\leq \bigg[ 4\big( \norm[0]{\mu_0 - \hat{x}_0}^2 + \norm[0]{P_0}[d_x+2] n \big) + \frac{un}{2\rho} \bigg] \neper^{2\rho T} \\
&\leq \underbrace{ 4\bigg[ \norm[0]{\mu_0 - \hat{x}_0}^2 + \norm[0]{P_0}(d_x+2) + \frac{un}{8\rho} \bigg] \neper^{2\rho T} }_{\eqqcolon C_T} n.
\end{split}
\end{equation*}
This provides a bound on the initial value for the case $\alpha = -2\lambda$, $t_0 = T$, and $t \geq T$ in \Cref{prop:ODEProp}:
\begin{equation*}
\expec(\norm[0]{E_t}^{2n})^{1/n} \leq \expec(\norm[0]{E_T}^{2n})^{1/n} \neper^{-2\lambda(t-T)} + \frac{un}{2\lambda} \leq \bigg( C_T \neper^{-2\lambda(t-T)} + \frac{u}{2\lambda} \bigg) n.
\end{equation*}
That is,
\begin{equation*}
\expec(\norm[0]{E_t}^{2n}) \leq \Big( C_T \neper^{-2\lambda(t-T)} + u/(2\lambda) \Big)^n n^n.
\end{equation*}
The claimed exponential concentration inequality follows by applying Bernstein's inequality of \Cref{thm:bernstein} to $X = \norm[0]{E_t}^{2}$ with $\alpha = C_T \neper^{-2\lambda(t-T)} + u / (2\lambda)$.
\end{proof}

\section{Proof of \Cref{thm:ConInEqDisc}}\label{appendix:ConcProofDisc}

\begin{proof}[Proof of \Cref{thm:ConInEqDisc}]
Norm of the filtering error is
\begin{equation*}
\begin{split}
\norm[0]{E_k}^2 ={}& \big[ f(X_{k-1}) - \Lin_{\Xkf_{k-1},P_{k-1}}(f) \big]^\T ( \Id - K_k H )^\T \\
&\quad\quad\times( \Id - K_k H ) \big[ f(X_{k-1}) - \Lin_{\Xkf_{k-1},P_{k-1}}(f) \big] \\
&+ 2 \big[ f(X_{k-1}) - \Lin_{\Xkf_{k-1},P_{k-1}}(f) \big]^\T ( \Id - K_k H )^\T \\
&\quad\quad\times\big[ ( \Id - K_k H) Q^{1/2} W_k - K_k R^{1/2} V_k \big] \\
&+ \big[ ( \Id - K_k H) Q^{1/2} W_k - K_k R^{1/2} V_k \big]^\T \big[ ( \Id - K_k H) Q^{1/2} W_k - K_k R^{1/2} V_k \big].
\end{split}
\end{equation*}
We immediately obtain
\begin{equation*}
\kappa = \sup_{k \geq 1} \norm[0]{K_k} = \norm[1]{P_{k \mid k-1} H^\T (H P_{k \mid k-1} H^\T + R)^{-1}} \leq \lambda_P^p \norm[0]{H} \norm[0]{R^{-1}}.
\end{equation*}
Using \Cref{ass:LtDisc} and \eqref{eq:contractivityConditionDisc}, we get the recursive filtering error bound
\begin{equation} \label{eq:recursive-error-disc}
\norm[0]{E_k}^2 \leq \lambda_{df}^2 \norm[0]{E_{k-1}}^2 + C_f \lambda_d^2 \lambda_P^u + \norm[0]{U_k}^2 + 2 M_k,
\end{equation}
where $C_f$ is the constant of \Cref{ass:LtDisc} for the function $f$ and the random variables
\begin{equation*}
  U_k = ( \Id - K_k H) Q^{1/2} W_k - K_k R^{1/2} V_k
\end{equation*}
and
\begin{equation*}
M_k = \big[ f(X_{k-1}) - \Lin_{\Xkf_{k-1},P_{k-1}}(f) \big]^\T ( \Id - K_k H )^\T U_k
\end{equation*}
are zero-mean because $W_k$ and $V_k$ are independent of $X_{k-1}$, $\Xkf_{k-1}$, $P_{k-1}$, and $K_k$. 
Because
\begin{equation} \label{eq:UkVarBound}
\expec(\norm[0]{U_k}^2) \leq u_d \coloneqq \lambda_d^2 \trace(Q) + \kappa^2 \trace(R),
\end{equation}
we have 
\begin{equation*}
\expec( \norm[0]{E_k}^2 ) \leq \lambda_{df}^2 \expec( \norm[0]{E_{k-1}}^2 ) + u_d + C_f \lambda_d^2 \lambda_P^u.
\end{equation*}
The discrete Grönwall's inequality~\eqref{eq:Gronwall-discrete} then produces
\begin{equation*}
\expec( \norm[0]{E_k}^2 ) \leq \lambda_{df}^{2k} \expec( \norm[0]{E_0}^2 ) + \frac{u_d + C_f \lambda_d^2 \lambda_P^u}{1-\lambda_{df}^2} = \lambda_{df}^{2k} \big( \norm[0]{\mu_0 - \hat{x}_0}^2 + \trace(\Sigma_0) \big) + \frac{u_d + C_f \lambda_d^2 \lambda_P^u}{1-\lambda_{df}^2},
\end{equation*}
which is the mean square bound~\eqref{eq:MainMeanSquareDisc}.
Here we have used $\expec( \norm[0]{X}^2 ) = \trace(P) + \norm[0]{m}^2$, which holds for any Gaussian random vector $X \sim \gauss(m, P)$.

To obtain the concentration inequality~\eqref{eq:MainConcIneqDisc}, we bound $\expec( \norm[0]{E_k}^{2n} )$ for every $n \geq 1$ and use Bernstein's inequality.
The random variable $M_k$ admits the bound
\begin{equation*}
M_k \leq \big( \lambda_{df} \norm[0]{E_{k-1}} + \eta \big) \norm[0]{U_k} \quad \text{ with } \quad \eta = \lambda_d(C_f \lambda_P^u)^{1/2}.
\end{equation*}
Inequality~\eqref{eq:recursive-error-disc} gives
\begin{equation*}
\expec( \norm[0]{E_k}^{2n} ) \leq \expec\big[ \big( \lambda_{df}^2 \norm[0]{E_{k-1}}^2 + \norm[0]{U_k}^2 + 2 M_k + \eta^2 \big)^n \big]
\end{equation*}
and consequently Minkowski's inequality yields
\begin{equation}\label{eq:DiscEK2n1n}
\begin{split}
\expec( \norm[0]{E_k}^{2n} )^{1/n} &\leq \expec\big[ \big( \lambda_{df}^2 \norm[0]{E_{k-1}}^2 + \norm[0]{U_k}^2 + 2 M_k + \eta^2 \big)^n \big]^{1/n} \\
&\leq \lambda_{df}^2 \expec( \norm[0]{E_{k-1}}^{2n} )^{1/n} + \expec( \norm[0]{U_k}^{2n} )^{1/n} + 2 \expec( M_k^n)^{1/n} + \eta^2.
\end{split}
\end{equation}
By \Cref{lemma:ChiSquareLemma} and \eqref{eq:UkVarBound}, $\expec( \norm[0]{U_k}^{2n} )^{1/n} \leq (d_x + 2) n u_d$, and by Minkowski's and Hölder's inequalities,
\begin{equation*}
\begin{split}
\expec( M_k^n)^{1/n} &\leq \expec( \norm[0]{U_k}^{n} )^{1/n} \expec \big( [\lambda_{df} \norm[0]{E_{k-1}} + \eta]^n \big)^{1/n} \\
&\leq \expec( \norm[0]{U_k}^{n} )^{1/n} \big[ \lambda_{df} \expec(\norm[0]{E_{k-1}}^n)^{1/n} + \eta \big] \\
&\leq \expec( \norm[0]{U_k}^{2n} )^{1/(2n)} \big[ \lambda_{df} \expec(\norm[0]{E_{k-1}}^{2n})^{1/(2n)} + \eta \big] \\
&\leq \sqrt{(d_x + 2) n u_d} \big[ \lambda_{df} \expec(\norm[0]{E_{k-1}}^{2n})^{1/(2n)} + \eta \big].
\end{split}
\end{equation*}
Inserting these bounds into \eqref{eq:DiscEK2n1n} and recognizing that the result can be bounded by a sum of two quadratic terms produces
\begin{equation*}
\expec( \norm[0]{E_k}^{2n} )^{1/(2n)} \leq \lambda_{df} \expec(\norm[0]{E_{k-1}}^{2n})^{1/(2n)} + 2\sqrt{(d_x + 2) n u_d} + \eta.
\end{equation*}
Then the discrete Grönwall's inequality and \Cref{lemma:ChiSquareLemma} yield
\begin{equation*}
\begin{split}
\expec( \norm[0]{E_k}^{2n} )^{1/(2n)} &\leq \lambda_{df}^k \expec( \norm[0]{E_0}^{2n} )^{1/(2n)} + \frac{2\sqrt{(d_x + 2) n u_d} + \eta}{1 - \lambda_{df}} \\
&\leq \bigg( 2 \lambda_{df}^k \Big[\norm[0]{\mu_0 - \hat{x}_0} + \norm[0]{\Sigma_0}^{1/2} \sqrt{(d_x+2) n} \, \Big] + \frac{2\sqrt{(d_x + 2) n u_d} + \eta}{1 - \lambda_{df}} \bigg) \\
&\leq 2 \bigg( \lambda_{df}^k \Big[\norm[0]{\mu_0 - \hat{x}_0} + \norm[0]{\Sigma_0}^{1/2} \Big] + \frac{\sqrt{u_d} + \eta}{1 - \lambda_{df}} \bigg) \sqrt{(d_x+2) n}.
\end{split}
\end{equation*}
This proves that
\begin{equation*}
\begin{split}
\expec( \norm[0]{E_k}^{2n} ) &\leq 2^{2n} \bigg( \lambda_{df}^k \Big[\norm[0]{\mu_0 - \hat{x}_0} + \norm[0]{\Sigma_0}^{1/2} \Big] + \frac{\sqrt{u_d} + \eta}{1 - \lambda_{df}} \bigg)^{2n} (d_x + 2)^n n^n
\end{split}
\end{equation*}
The claim follows from Bernstein's inequality with $X = \norm[0]{E_k}^2$ and
\begin{equation*}
\alpha = 4 \bigg( \lambda_{df}^k \Big[\norm[0]{\mu_0 - \hat{x}_0} + \norm[0]{\Sigma_0}^{1/2} \Big] + \frac{\sqrt{u_d} + \eta}{1 - \lambda_{df}} \bigg)^{2}.
\end{equation*}

\end{proof}

\newpage

\bibliographystyle{siamplain}

\end{document}